\newtheorem{prop}{Property}{\bfseries}{\itshape}
{\bfseries}{\itshape}
\newtheorem{obs}{Observation}{\bfseries}{\itshape}
\Crefname{theorem}{Theorem}{Theorems}
\Crefname{theorem2}{Theorem}{Theorems}
\Crefname{lemma}{Lemma}{Lemmas}
\Crefname{prop}{Property}{Properties}
\Crefname{obs}{Observation}{Observations}
\Crefname{question}{Question}{Questions}
\Crefname{invariant}{Invariant}{Invariants}
\Crefname{figure}{Fig.}{Figs.}
\newcommand{\srac}[2]{\texorpdfstring{$#1$-bend $#2$-apRAC}{#1-bend #2-apRAC}}
\newcommand{\rac}[1]{\texorpdfstring{$#1$-bend apRAC}{#1-bend apRAC}}
\newcommand{\one}{I.\ref{inv:oblique-1}\xspace}
\newcommand{\two}{I.\ref{inv:unique}\xspace}
\newcommand{\three}{I.\ref{inv:oblique-2}\xspace}
\newcommand{\four}{I.\ref{inv:next-to}\xspace}
\newcommand{\dist}{\ensuremath{\operatorname{dist}}}
\title{Axis-Parallel Right Angle Crossing Graphs}
\author{Patrizio~Angelini}{John Cabot University, Rome, Italy}{pangelini@johncabot.edu}{0000-0002-7602-1524}{}
\author{Michael~A.~Bekos}{Department of Mathematics, University of Ioannina, Ioannina, Greece}{bekos@uoi.gr}{0000-0002-3414-7444}{}
\author{Julia~Katheder}{Wilhelm-Schickard-Institut f{\"u}r Informatik, Universit{\"a}t T{\"u}bingen, T{\"u}bingen, Germany}{julia.katheder@uni-tuebingen.de}{0000-0001-9186-3538}{}
\author{Michael~Kaufmann}{Wilhelm-Schickard-Institut f{\"u}r Informatik, Universit{\"a}t T{\"u}bingen, T{\"u}bingen, Germany}{mk@informatik.uni-tuebingen.de}{0000-0001-9186-3538}{}
\author{Maximilian~Pfister}{Wilhelm-Schickard-Institut f{\"u}r Informatik, Universit{\"a}t T{\"u}bingen, T{\"u}bingen, Germany}{maximilian.pfister@uni-tuebingen.de}{0000-0002-7203-0669}{}
\author{Torsten~Ueckerdt}{Institute of Theoretical Informatics, Karlsruhe Institute of Technology,Karlsruhe, Germany}{torsten.ueckerdt@kit.edu}{0000-0002-0645-9715}{}
\authorrunning{P.~Angelini, M.~A.~Bekos, J.~Katheder, M.~Kaufmann, M.~Pfister, T.~Ueckerdt}
\keywords{Graph drawing, RAC graphs, Graph drawing algorithms}
\begin{document}



\maketitle

\begin{abstract}
A RAC graph is one admitting a RAC drawing, that is, a polyline drawing in which each crossing occurs at a right angle. Originally motivated by psychological studies on readability of graph layouts, RAC graphs form one of the most prominent graph classes in beyond planarity. 

In this work, we study a subclass of RAC graphs, called axis-parallel RAC (or apRAC, for short), that restricts the crossings to
pairs of axis-parallel edge-segments. apRAC drawings combine the readability of planar drawings with the clarity of (non-planar) orthogonal drawings. We consider these graphs both with and without bends. Our contribution is as follows: (i)~We study inclusion relationships between apRAC and traditional RAC graphs. (ii)~We establish bounds on the edge density of apRAC graphs. (iii)~We show that every graph with maximum degree $8$ is $2$-bend apRAC and give a linear time drawing algorithm. Some of our results on apRAC graphs also improve the state of the art for general RAC graphs. We conclude our work with a list of open questions and a discussion of a natural generalization of the apRAC model.
\end{abstract}
\newpage

\section{Introduction}

Planar graphs form a fundamental graph class in algorithms and graph theory. This is due to the fact that planar graphs have many useful properties, e.g., they are closed under minors and have a linear number of edges. Several decision problems, which are NP-complete for general graphs, become polynomial-time tractable, when restricted to planar inputs, e.g.~\cite{DBLP:journals/siamcomp/Hadlock75}. As a result, the corresponding literature is tremendously~large.

A recent attempt to extend this wide knowledge from planar to non-planar graphs was made in the context of \emph{beyond-planarity}, informally defined as a generalization of planarity encompassing several graph-families that are close-to-planar in some sense (e.g., by imposing structural restrictions on corresponding drawings). Notable examples are the classes of (i)~$k$-planar graphs~\cite{DBLP:journals/combinatorica/PachT97}, in which each edge cannot be crossed more than $k$ times, (ii)~$k$-quasi-planar graphs~\cite{DBLP:journals/combinatorica/AgarwalAPPS97}, which disallow $k$ pairwise crossing edges, and  (iii)~$k$-gap planar graphs~\cite{DBLP:journals/tcs/BaeBCEE0HKMRT18}, in which each crossing is assigned to one of the two involved edges such that each edge is assigned at most $k$ of its crossings. For an overview refer to the recent textbook~\cite{DBLP:books/sp/20/HT2020}. 

While all of the aforementioned graph-classes are topological, meaning that the actual geometry of the graph's elements is not important, there is a single class proposed in the literature that is purely geometric. The motivation for its study primarily stems from cognitive experiments indicating that the negative effect of edge crossings in a graph drawing tends to be eliminated when the angles formed at the edge crossings are large~\cite{DBLP:journals/vlc/HuangEH14}. In that aspect, the class of \emph{right-angle-crossing} (RAC) graphs forms the optimal case in this scenario, where all crossing angles occur at $90^\circ$. Formally, it was  introduced by Didimo, Eades and Liotta~\cite{DBLP:journals/tcs/DidimoEL11} a decade ago, and since then it has been a fruitful subject of intense research~\cite{DBLP:journals/jgaa/AngeliniCDFBKS11,DBLP:journals/algorithmica/GiacomoDEL14,DBLP:journals/mst/GiacomoDLM11,DBLP:journals/dam/EadesL13,DBLP:conf/esa/Forster020}.  

Generally speaking, the research on RAC graphs has focused on two main research directions depending on whether bends are allowed along the edges or not. Formally, in a $k$-bend RAC drawing of a graph each edge is a polyline with at most $k$ bends and the angle between any two crossing edge-segments is $90^\circ$. Accordingly, a $k$-bend RAC graph is one admitting such a drawing. A $0$-bend RAC graph (or simply RAC graph) with $n$ vertices has at most $4n-10$ edges~\cite{DBLP:journals/tcs/DidimoEL11}, that is, at most $n-4$ edges more than those of a corresponding maximal planar graph. The edge-density bounds for $1$- and $2$-bend RAC graphs are $5.5n-10$~\cite{DBLP:journals/tcs/AngeliniBFK20} and $74.2n$~\cite{DBLP:journals/comgeo/ArikushiFKMT12}, respectively, while for $k\geq 3$ it is known that every graph is $k$-bend RAC~\cite{DBLP:conf/esa/Forster020}. The research on RAC graphs, however, is not limited to edge-density bounds. 
Several algorithmic and combinatorial results~\cite{DBLP:journals/jgaa/AngeliniCDFBKS11,DBLP:journals/jgaa/ArgyriouBS12,DBLP:journals/jgaa/ArgyriouBKS13,DBLP:journals/cj/GiacomoDGLR15,DBLP:journals/ipl/DidimoEL10,DBLP:conf/esa/Forster020},~as well~as relationships with other graph classes~\cite{DBLP:journals/tcs/BekosDLMM17,DBLP:journals/tcs/BrandenburgDEKL16,DBLP:journals/comgeo/ChaplickLWZ19,DBLP:journals/ijcga/DehkordiE12,DBLP:journals/dam/EadesL13,DBLP:conf/swat/ChaplickFK020} are known; see~\cite{DBLP:books/sp/20/Didimo20}~for~a~survey.  

In this work, we continue the study of RAC graphs along a new and intriguing research line. Inspired by several well-established models for representing  graphs (including the widely-used orthogonal model~\cite{DBLP:journals/comgeo/BiedlK98,DBLP:conf/gd/FossmeierK95,DBLP:journals/siamcomp/GargT01}), we introduce and study a natural subfamily of $k$-bend RAC graphs, which restricts all edge segments involved in crossings to be axis-parallel. We call this class \rac{k}. We expect that this restriction will further enhance the readability of the obtained drawings, as these combine the simple nature of the planar drawings with the clarity of the (non-planar) orthogonal drawings by allowing non axis-parallel edge segments, only when those are crossing-free. We further expect that our restriction will lead to new results of algorithmic nature. As a matter of fact, almost all algorithms that have been already proposed in the literature about $k$-bend RAC graphs in fact yield \rac{k} drawings~\cite{DBLP:journals/jgaa/BekosDKW16,DBLP:journals/tcs/DidimoEL11,DBLP:conf/esa/Forster020}; e.g., every Hamiltonian degree-$3$ graph is \rac{0}~\cite{DBLP:journals/jgaa/ArgyriouBKS13}, while degree-$4$ and degree-$6$ graphs are $1$- and $2$-bend apRAC, respectively~\cite{DBLP:conf/mfcs/AngeliniBKKP22,DBLP:journals/jgaa/AngeliniCDFBKS11}. 

\noindent\medskip Our contribution is as follows:
\begin{itemize}
\item In \cref{sec:preliminaries} we study preliminary properties of \rac{0} graphs in order to prove that recognizing \rac{0} graphs is NP-hard (see \cref{thm:rac-0-np-hard}).

\item We study whether \rac{k} graphs form a proper subclass of  $k$-bend RAC graphs: For $k=0$, we establish a strict inclusion relationship with $K_6$ minus one edge being the smallest graph separating the two classes (see \cref{lem:different}). Further, our edge-density result for \rac{1} graphs establishes a strict inclusion relationship for $k=1$, see Corollary~\ref{cor:separatating-bend1}. The case $k=2$ is more challenging (due to the degrees of freedom introduced by bends) and we leave it as an open problem. For $k \geq 3$, the two classes coincide, as the construction establishing that every graph is $3$-bend RAC~\cite{DBLP:journals/tcs/DidimoEL11} can be converted to \rac{3} by a rotation of $45^\circ$.

\item We establish bounds on the edge density of $n$-vertex \rac{k} graphs: For $k=0$, we prove an upper bound of $4n-\sqrt{n}-6$ and give a corresponding lower bound construction with $4n-2\lfloor\sqrt{n}\rfloor - 7$ edges (see \cref{thm:aprac0-density}). For $k \in \{1,2\}$, we give linear upper bounds that are tight up to small additive constants (see \cref{thm:ap-1,thm:ap-2}). Notably, for $k=2$ our lower-bound construction is a graph with $n$ vertices and $10n-\mathcal{O}(1)$ edges. This bound extends to general $2$-bend RAC graphs and improves the previous best one of $7.83n - \mathcal{O}(\sqrt{n})$~\cite{DBLP:journals/comgeo/ArikushiFKMT12}, answering an open question in~\cite{DBLP:journals/tcs/AngeliniBFK20}.

\item We show that every graph with maximum degree $8$ is \rac{2} and give a linear time drawing algorithm (see \cref{thm:deg-8}) improving the previous best known result stating that $7$-edge colorable degree-$7$ graphs are $2$-bend (ap)RAC~\cite{DBLP:conf/mfcs/AngeliniBKKP22}.

\item Inspired by the slope-number of graphs, in \cref{sec:conclusions} we discuss a natural generalization of apRAC drawings where each edge segment involved in a crossing is parallel or perpendicular to a line having one out of $s$ different~slopes.
\end{itemize}

\section{Preliminaries}\label{sec:preliminaries}
Throughout this paper, basic graph drawing concepts are used as found in~\cite{DBLP:books/sp/20/HT2020,DBLP:reference/crc/2013gd}.
Let $G$ be a graph and $\Gamma$ be a polyline drawing of $G$ and let $e = (u,v)$ be an edge of $G$. We say that $e$ uses a horizontal (vertical) port at $u$ if the edge-segment of $e$ that is incident to $u$ is parallel to the $x$-axis (to the $y$-axis) in $\Gamma$. If $e$ uses neither a vertical nor a horizontal port at $u$, then it uses an \emph{oblique} port at $u$. In particular, we denote the four orthogonal ports (i.e., the vertical and the horizontal ports) as $\{N,E,S,W\}$-ports according to compass directions.
In a polyline drawing, vertices and bends are placed on grid-points, whereby the area of the drawing is determined by the smallest rectangular bounding box that contains the drawing.
In the following, we recall two properties that hold for $0$-bend RAC~drawings.
\begin{prop}[Didimo, Eades and Liotta~\cite{DBLP:journals/tcs/DidimoEL11}]\label{prop:no-fan}
In a $0$-bend RAC drawing no edge is crossed by two adjacent edges (see \cref{fig:prop1}). 
\end{prop}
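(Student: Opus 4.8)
The plan is to argue by contradiction, exploiting that in a $0$-bend RAC drawing every edge is a straight-line segment and every crossing is at a right angle. Suppose some edge $e$ is crossed by two \emph{adjacent} edges $e_1 = (a,b)$ and $e_2 = (a,c)$, i.e., $e_1$ and $e_2$ share the endpoint $a$ (note that $e$ is adjacent to neither $e_1$ nor $e_2$, since adjacent edges do not cross in a drawing). Since the drawing is $0$-bend RAC, each of $e_1$ and $e_2$ meets $e$ at a $90^\circ$ angle, so both $e_1$ and $e_2$ are perpendicular to the line supporting $e$.

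The key observation is then purely geometric: two segments that are perpendicular to a common line are parallel to each other, and two parallel segments sharing a point must be collinear. Hence $e_1$ and $e_2$ lie on a common line $\ell$ (perpendicular to $e$). As $e$ is a straight segment not parallel to $\ell$, it meets $\ell$ in exactly one point $p$; and since $e$ crosses both $e_1$ and $e_2$, this point $p$ is an interior point of $e_1$ and of $e_2$. In particular $p \ne a$, because an interior crossing point of $e$ cannot coincide with the vertex $a$ (edges do not pass through vertices other than their own endpoints).

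I would close the argument by deriving an overlap of $e_1$ and $e_2$: the point $p$ lies strictly between $a$ and $b$ on $\ell$ and also strictly between $a$ and $c$ on $\ell$, so the sub-segment of $\ell$ from $a$ to $p$ is contained in both $e_1$ and $e_2$. Thus the two adjacent edges $e_1$ and $e_2$ overlap in more than the shared vertex $a$, which is impossible in a proper drawing. This contradiction establishes the claim. I do not anticipate a genuine obstacle here; the only point requiring a word of care is the degenerate case $p = a$, which is ruled out by the standard convention that edges meet vertices only at their endpoints.
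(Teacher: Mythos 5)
Your proof is correct, and it is essentially the standard argument for this property: the paper itself only cites it from Didimo, Eades and Liotta without reproving it, and their argument is exactly this one (both edges crossing $e$ at $90^\circ$ are perpendicular to the line through $e$, hence parallel, hence—sharing the vertex $a$—collinear and overlapping, which is impossible). No gaps; the degenerate case $p=a$ is indeed excluded since a crossing point is interior to both edges involved.
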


\begin{figure}[t]
\centering
    \begin{subfigure}[b]{.22\textwidth}
    \centering    \includegraphics[scale=1,page=1]{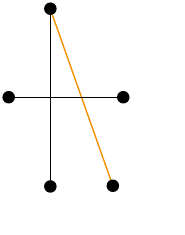}
    \subcaption{}
    \label{fig:prop1}
    \end{subfigure}
    \hfil
    \begin{subfigure}[b]{.22\textwidth}
    \centering    \includegraphics[scale=1,page=2]{figures/props1-4.pdf}
    \subcaption{}
    \label{fig:prop2}
    \end{subfigure}
    \hfil
    \begin{subfigure}[b]{.22\textwidth}
    \centering
\includegraphics[scale=1,page=3]{figures/props1-4.pdf}
    \subcaption{}
    \label{fig:prop3}
    \end{subfigure}
    \begin{subfigure}[b]{.24\textwidth}
    \centering    \includegraphics[scale=1,page=4]{figures/props1-4.pdf}
    \subcaption{}
    \label{fig:prop4}
    \end{subfigure}    
    \caption{Forbidden configurations by \cref{prop:no-fan,prop:no-triag,prop:no-three-outside,prop:vertex-in-triangle}.
    }
    \label{fig:props1-4}
\end{figure}

\begin{prop}[Didimo, Eades and Liotta~\cite{DBLP:journals/tcs/DidimoEL11}]\label{prop:no-triag}
A $0$-bend RAC drawing does not contain a triangle $T$ formed by edges of the graph and two edges $(u, v)$ and $(u, v')$, such that $u$ lies outside $T$ while both $v$ and $v'$ lie inside $T$ (see \cref{fig:prop2}).
\end{prop}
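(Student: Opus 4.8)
The plan is to argue by contradiction: suppose a $0$-bend RAC drawing $\Gamma$ contains a triangle $T$ on vertices $a,b,c$, a vertex $u$ outside $T$, and edges $(u,v)$ and $(u,v')$ with $v\neq v'$ both strictly inside $T$. First I would record a purely topological observation: since $\Gamma$ is straight-line and $T$ is convex, and since $u$ lies outside while $v$ lies inside the closed region bounded by $T$, the segment $uv$ crosses $\partial T$ exactly once; the same holds for $uv'$. In particular each of $uv,uv'$ crosses exactly one of the three edges $ab,bc,ca$, in the relative interior of that edge (edges do not pass through non-incident vertices, and $u,v,v'\notin\{a,b,c\}$), and at that crossing it passes from the exterior to the interior of $T$.

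I would then distinguish two cases. If $uv$ and $uv'$ cross the \emph{same} edge of $T$, then that edge is crossed by the two edges $uv$ and $uv'$, which are adjacent (they share $u$); this contradicts \cref{prop:no-fan}. Otherwise $uv$ and $uv'$ cross two \emph{distinct} edges of $T$; since any two edges of a triangle share a vertex, after relabeling we may assume $uv$ crosses $ab$ and $uv'$ crosses $bc$, with $b$ the common endpoint. This is the case where the RAC property is used: because $\Gamma$ is RAC, $uv$ is perpendicular to $ab$ and $uv'$ is perpendicular to $bc$, so the crossing point of $uv$ with $ab$ is precisely the foot of the perpendicular from $u$ to the line through $a$ and $b$, and it lies on the segment $ab$; symmetrically for $uv'$, $b$ and $c$.

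To close the argument I would put $b$ at the origin with $a$ on the positive $x$-axis and $c$ at polar angle $\beta\in(0^{\circ},180^{\circ})$, where $\beta$ is the interior angle of $T$ at $b$, and translate the situation into four inequalities on $u=(u_x,u_y)$: (i) $u$ is on the side of the line $ab$ not containing $c$, i.e.\ $u_y<0$; (ii) the foot of the perpendicular from $u$ onto the line $ab$ lies on the segment $ab$, which in particular gives $u_x\ge 0$; (iii) $u$ is on the side of the line $bc$ not containing $a$, i.e.\ $u_x\sin\beta-u_y\cos\beta<0$; and (iv) the foot of the perpendicular from $u$ onto the line $bc$ lies on the segment $bc$, which in particular gives $\langle u,(\cos\beta,\sin\beta)\rangle=u_x\cos\beta+u_y\sin\beta\ge 0$. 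Combining (ii) and (iii) yields $u_y\cos\beta>u_x\sin\beta\ge 0$, so $u_y\cos\beta>0$, which together with (i) forces $\cos\beta<0$. But then $u_x\cos\beta\le 0$ and $u_y\sin\beta<0$, so $u_x\cos\beta+u_y\sin\beta<0$, contradicting (iv).

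I expect the main obstacle to be bookkeeping rather than depth: one must make sure that the single transversal intersection of $uv$ (resp.\ $uv'$) with $\partial T$ is a genuine crossing with one of the three edges, so that \cref{prop:no-fan} is applicable in the first case and the "foot of the perpendicular on the segment'' statement is meaningful in the second. The computation at vertex $b$ in the last step is short once set up, but fixing the orientation of the two half-planes (which side is the exterior one, relative to $c$ and to $a$ respectively) is the one place where a sign slip would break the argument, so I would state those choices explicitly.
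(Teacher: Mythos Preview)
The paper does not supply its own proof of \cref{prop:no-triag}; it is quoted verbatim as a result of Didimo, Eades and Liotta, so there is nothing in the paper to compare against beyond the statement itself.

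Your argument is correct. The case split (same side of $T$ versus two different sides) is the natural one, and your use of \cref{prop:no-fan} in the first case is exactly how that property is meant to be applied. In the second case your coordinate computation at the shared vertex $b$ is sound: the four constraints you list are precisely the consequences of ``$u$ on the non-$c$ side of $ab$'', ``perpendicular foot on segment $ab$'', ``$u$ on the non-$a$ side of $bc$'', and ``perpendicular foot on segment $bc$'', and they are indeed inconsistent for $\beta\in(0^\circ,180^\circ)$. One small remark: in deriving (i) you implicitly use that $v$ lies \emph{strictly} inside $T$, so $v$ is on the $c$-side of line $ab$, whence the crossing of segment $uv$ with segment $ab$ forces $u$ to the opposite side; you might say this explicitly, since that is exactly the place where ``$v$ inside $T$'' is consumed. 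The bookkeeping you flag (that the intersection with $\partial T$ is a transversal crossing in the relative interior of one edge, and that $u,v,v'\notin\{a,b,c\}$) is handled correctly.
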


\noindent Next, we establish two properties limited to \rac{0} drawings.

\begin{prop}\label{prop:no-three-outside}
A \rac{0} drawing does not contain a triangle $T$ formed by edges of the graph and three vertices $v_1,v_2,v_3$ adjacent to a vertex $u$, such that $v_1,v_2,v_3$ lie outside $T$ and $u$ lies inside $T$ (see \cref{fig:prop3}).
\end{prop}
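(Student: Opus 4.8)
The plan is to obtain a contradiction from the tension between axis-parallelism and the convexity of $T$. Suppose, for contradiction, that some \rac{0} drawing contains the configuration described. Since $u$ lies in the interior of $T$, the three vertices of $T$ are not collinear, so $T$ is a non-degenerate triangle and therefore bounds a convex region. Fix $i\in\{1,2,3\}$ and consider the straight segment $uv_i$: it has one endpoint inside $T$ and one outside, so by convexity $uv_i\cap T$ is a sub-segment of $uv_i$ incident to $u$, and hence $uv_i$ meets $\partial T$ in exactly one point; by the standard general-position assumption that edges do not pass through non-incident vertices, this point lies in the relative interior of a single side of $T$ and constitutes a genuine transversal crossing between $uv_i$ and that side. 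Thus each of $uv_1,uv_2,uv_3$ crosses exactly one side of $T$, and these sides are edges of the graph not incident to $u$.

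I would then combine two observations. First, every side of $T$ crossed by some $uv_i$ is involved in a crossing, hence must be axis-parallel; and since two parallel segments cannot both be sides of a triangle, $T$ has at most one horizontal and at most one vertical side, so at most two of its sides are axis-parallel. Second, the edges $uv_1,uv_2,uv_3$ are pairwise adjacent (they share $u$), so \cref{prop:no-fan} forbids any side of $T$ from being crossed by two of them; therefore the three sides crossed by $uv_1,uv_2,uv_3$ are pairwise distinct. Putting these together, we would need three distinct axis-parallel sides of $T$, contradicting that $T$ has at most two axis-parallel sides. This contradiction establishes the property.

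There is essentially no computational content here; the only steps that deserve care are the elementary planar topology of the first paragraph — that a straight segment leaving a convex triangle through the interior of a side does so at a single transversal crossing, which I would justify purely from convexity together with the general-position assumption — and the observation that $u$ lying inside $T$ forces $T$ to be non-degenerate. Should one wish to avoid invoking \cref{prop:no-fan}, the ``pairwise distinct'' claim can be replaced by a self-contained sub-argument: if $uv_1$ and $uv_2$ both crossed a common side $t$, then, $t$ being axis-parallel (say horizontal), both $uv_1$ and $uv_2$ would be vertical and hence collinear with the vertical line through $u$; they would then share the sub-segment of that line from $u$ to the unique point where it meets $t$, i.e.\ overlap in a non-degenerate arc, which is impossible in a drawing.
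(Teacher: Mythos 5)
Your proof is correct and follows essentially the same route as the paper: \cref{prop:no-fan} forces the three edges $uv_1,uv_2,uv_3$ to cross three distinct sides of $T$, so all three sides are involved in crossings and hence axis-parallel, which is impossible for a triangle. Your additional details (convexity, general position) and the self-contained overlap argument replacing \cref{prop:no-fan} are fine but not needed beyond the paper's two-line argument.
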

\begin{proof}
Assuming the contrary, \cref{prop:no-fan} implies that no two edges adjacent to $u$ cross the same boundary edge of $T$. Hence, $T$ consists of three axis-parallel edges; a contradiction.
\end{proof}

\begin{prop}\label{prop:vertex-in-triangle}
Let $\Gamma$ be a \rac{0} drawing containing a triangle $T$ formed by edges of the graph and two adjacent vertices $u$ and $v$ such that $u$ is contained inside $T$ while $v$ is outside $T$. Then, $\Gamma$ does not contain a vertex $w$ adjacent to $u$, $v$ and all vertices of $T$ (see \cref{fig:prop4}).
\end{prop}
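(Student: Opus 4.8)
The plan is to argue by contradiction, assuming such a drawing $\Gamma$ exists, and to derive a violation of one of the structural properties already established. Let $T$ be the triangle with vertices $t_1, t_2, t_3$, let $u$ be inside $T$, $v$ outside $T$, and $w$ adjacent to $u$, $v$, and all three $t_i$. First I would observe that, since $w$ is adjacent to all three corners of $T$, the vertex $w$ itself cannot lie inside $T$: if it did, then $w$ and its three neighbors $t_1, t_2, t_3$ (which lie on the boundary, hence outside the open region, but more usefully on $T$) would be in a configuration controlled by \cref{prop:no-three-outside} or \cref{prop:vertex-in-triangle} applied with roles swapped — so the clean way is to note that $w$ together with $u$ (both inside, adjacent to each other) and the three edges $wt_1, wt_2, wt_3$ leaving the triangle forces each $wt_i$ to cross a boundary edge of $T$; two of these crossing edges are adjacent (they share $w$) and cross the same boundary edge of $T$, contradicting \cref{prop:no-fan}. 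Hence $w$ lies outside $T$.

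Now $w$ is outside $T$ and adjacent to $u$, which is inside $T$; therefore the edge $wu$ crosses the boundary of $T$, say it crosses the boundary edge $e = (t_i, t_j)$. Since $w$ is adjacent to $t_i$ and $t_j$ as well, the three edges $wu$, $wt_i$, $wt_j$ are pairwise adjacent at $w$. The edge $e$ is crossed by $wu$; by \cref{prop:no-fan} it is not crossed by any edge adjacent to $wu$, in particular not by $wt_i$ or $wt_j$ — which is consistent only if $wt_i$ and $wt_j$ reach their endpoints without re-entering across $e$. The key point I want to extract is that $w$, together with $t_i, t_j$ and the edge $e=(t_i,t_j)$, forms a triangle $T' = w t_i t_j$, and $u$ lies inside $T'$ (it is on the far side of $e$ from $w$, but bounded by the two edges $wt_i, wt_j$ — here I need the planarity of $\{wt_i, wt_j, e\}$, which follows since these three edges are pairwise adjacent and by \cref{prop:no-fan} no edge adjacent to one crosses another, plus the fact that $u$'s edge to $w$ crosses $e$). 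Then $v$ is adjacent to both $u$ (inside $T'$) and $w$ (a corner of $T'$, hence on $T'$); I would locate $v$ relative to $T'$ and relative to the original $T$ and run \cref{prop:no-three-outside} or \cref{prop:vertex-in-triangle} on the smaller triangle $T'$ with the vertex set $\{u, v, w, t_i, t_j\}$ to reach the contradiction.

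Concretely, the cleanest route is: apply \cref{prop:vertex-in-triangle} itself to the triangle $T' = w t_i t_j$, with the "inside" vertex being $u$ and the "outside, adjacent" vertex being $v$ (if $v$ lies outside $T'$), noting that $v$ is adjacent to $u$, to $w$, and to $t_i$ and $t_j$, i.e., to all vertices of $T'$ — this is exactly the forbidden configuration, giving the contradiction. If instead $v$ lies inside $T'$, then $v$ lies inside $T$ as well (since $T' \subseteq$ the region of $T$ on the far side of $e$ from $w$... actually $T'$ sticks out of $T$), so I would instead use that $u$ and $v$ are both inside $T'$, both adjacent to $w$ which is a corner of $T'$, and also each adjacent to $t_i, t_j$: apply \cref{prop:no-triag} to $T'$ with $w$ outside and $u, v'=v$ inside — wait, $w$ is a corner, not strictly outside; here I would instead perturb to a slightly smaller triangle or use \cref{prop:no-three-outside}.

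The main obstacle I anticipate is the case analysis on where $w$ and then $v$ sit relative to the nested triangles, and making rigorous the claim that $\{wt_i, wt_j, e\}$ bounds a face-like region containing $u$ — this requires carefully invoking \cref{prop:no-fan} to rule out spurious crossings among these mutually adjacent edges, and handling the degenerate possibility that $wu$ crosses two boundary edges of $T$ rather than one (which \cref{prop:no-fan} again forbids, since the two boundary edges it would cross are themselves non-adjacent but $wu$ is a single edge — so actually $wu$ may cross up to... no: in a RAC drawing an edge crossing two sides of a triangle is fine in general, but combined with $w$ being adjacent to all corners it creates an adjacent pair crossing a common edge). I would isolate this as the one delicate sub-lemma and handle the rest by the routing-and-region argument sketched above.
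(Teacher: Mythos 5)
Your proof does not go through; both branches of your case analysis contain genuine errors, and the argument never uses the axis-parallel hypothesis, which is exactly what the paper's proof needs in the hard case. For the case $w$ inside $T$: the edges $wt_1,wt_2,wt_3$ run from an interior point to the corners of a convex triangle, so they stay inside $T$ and cross no boundary edge at all — the fan crossing you invoke simply does not exist (and even if each $wt_i$ did cross the boundary, three edges and three sides force no repetition). The correct and much simpler handling of this case is \cref{prop:no-triag} with apex $v$: $v$ lies outside $T$ while $u$ and $w$ both lie inside and are both adjacent to $v$, which is precisely the forbidden configuration.

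For the case $w$ outside $T$, your plan collapses at the key geometric claim: since the straight edge $uw$ crosses the side $e=(t_i,t_j)$, the points $u$ and $w$ lie on opposite sides of the line through $t_i,t_j$, whereas the interior of $T'=wt_it_j$ lies on $w$'s side; hence $u$ is \emph{outside} $T'$, not inside it. Independently, applying \cref{prop:vertex-in-triangle} to $T'$ is circular (you set up no minimality or descent), and its hypotheses are not met anyway: the statement never assumes $v$ adjacent to $t_i$ or $t_j$, so there is no vertex available that is adjacent to $u$, $v$ and all corners of $T'$; your fallback via \cref{prop:no-triag} fails for the reason you yourself note ($w$ is a corner of $T'$). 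The paper's argument for this case is different and essentially uses axis-parallelism: $(u,v)$ and $(u,w)$ both cross $T$ and are adjacent, so by \cref{prop:no-fan} they cross distinct sides; since every crossing pairs a horizontal with a vertical segment and a triangle has at most one horizontal and one vertical side, $T$ must be a right triangle with axis-parallel legs, say $(u,v)$ crosses the horizontal leg $(v_1,v_2)$ and $(u,w)$ crosses the vertical leg $(v_2,v_3)$; then the straight edge $(w,v_1)$ is forced to cross $(v_2,v_3)$ as well, so $(u,w)$ and $(w,v_1)$ form a fan crossing on $(v_2,v_3)$, contradicting \cref{prop:no-fan}. Any repair of your approach would have to bring the apRAC restriction into play in this way rather than recursing on nested triangles.
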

\begin{proof}
For the sake of contradiction, assume there is a vertex $w$ adjacent to $u$, $v$ and all vertices of $T$.
If $w$ is inside $T$ in $\Gamma$, then $(v,u)$ and $(v,w)$ violate \cref{prop:no-triag}; a contradiction.
Otherwise, since $(u,v)$ and $(u,w)$ cross $T$, by \cref{prop:no-fan}, it follows that $T$ is a right-angled triangle whose legs are axis parallel. 
W.l.o.g., let $(v_1,v_2)$ and $(v_2,v_3)$ be the legs of $T$ crossed by $(u,v)$ and $(u,w)$, respectively, such that $(v_1,v_2)$ is horizontal and $(v_2,v_3)$ is vertical. 
It follows that the edge $(v_2,v_3)$ of $T$ is crossed by $(u,w)$ and $(w,v_1)$ violating \cref{prop:no-fan}; a contradiction.
\end{proof}
\noindent In Theorems \ref{thm:rac-0-np-hard} and \ref{thm:s-rac} we leverage the following property shown in~\cite{DBLP:journals/jgaa/ArgyriouBS12} of the so-called \emph{augmented square antiprism graph}. The gadget used in the NP-hardness proof of \cref{thm:rac-0-np-hard} is depicted in \cref{fig:np-d}, while the vertex-colored subgraph in \cref{fig:np-b} corresponds to the augmented square antiprism graph.

\begin{prop}[Argyriou, Bekos, Symvonis~\cite{DBLP:journals/jgaa/ArgyriouBS12}]\label{prop:antiprism-two-embbeddings}
Any straight-line RAC drawing of the augmented square antiprism graph
has two combinatorial embeddings.
\end{prop}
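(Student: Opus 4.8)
The statement is about the \emph{combinatorial embedding} of a straight-line RAC drawing, i.e.\ the cyclic order of edges around each vertex together with the planarization of the crossings. The key observation that drives the plan is that the augmented square antiprism graph $\mathcal{A}$ is planar and $3$-connected, so it admits exactly two planar combinatorial embeddings (its rotation system is unique up to reflection by Whitney's theorem, and both reflections are realized, e.g.\ by any straight-line planar — hence crossing-free, hence RAC — drawing and its mirror image). Consequently, if some straight-line RAC drawing of $\mathcal{A}$ had a crossing, then its planarization would provide a third combinatorial embedding. Thus the claim is equivalent to: \emph{every straight-line RAC drawing of $\mathcal{A}$ is crossing-free}, and this is what I would prove; the ``two'' embeddings then follow immediately from the uniqueness result for $3$-connected planar graphs.

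\textbf{Structure and case analysis.}
I would use that $\mathcal{A}$ is a maximal planar graph: it arises from the square antiprism — two $4$-cycles $C_1 = u_1u_2u_3u_4$ and $C_2 = v_1v_2v_3v_4$ joined by the standard zig-zag of $8$ edges — by inserting an apex $a_i$ inside $C_i$ and joining it to all of $V(C_i)$. Hence $\mathcal{A}$ is triangulated and its edge set splits into the two wheels $W_i = C_i + a_i$ and the $8$ antiprism connectors. Assume for contradiction that a straight-line RAC drawing $\Gamma$ of $\mathcal{A}$ has a crossing between two (necessarily independent) edges $e_1,e_2$. Exploiting the dihedral-plus-top/bottom symmetry of the antiprism, I would classify the position of the pair $\{e_1,e_2\}$ into a short list: (a) both inside one wheel (the only independent pairs in a $W_4$ are a spoke with a far rim edge, or two opposite rim edges); (b) one a spoke and one a connector; (c) two connectors; (d) one edge incident to an apex and one on the ``opposite'' side of the antiprism. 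In each case I would use that $\mathcal{A}$ is triangulated to exhibit, in the neighbourhood of the crossing, a triangle $T$ of $\mathcal{A}$ together with a vertex placed in a way forbidden by \cref{prop:no-fan,prop:no-triag,prop:no-three-outside,prop:vertex-in-triangle}: either a vertex adjacent to all of $V(T)$ lies inside $T$, or a vertex lies inside $T$ while an adjacent vertex lies outside (or vice versa), or two adjacent edges cross a common edge of $T$; and in the residual configurations where those properties do not bite, the right-angle condition forces at least five axis-parallel edge-segments incident to a single vertex, which is impossible since a vertex has only the four orthogonal ports $N,E,S,W$. This contradiction shows $\Gamma$ is crossing-free, and then, as noted above, $\Gamma$ realizes one of the two combinatorial embeddings of $\mathcal{A}$.

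\textbf{Main obstacle.}
The crux is the case analysis establishing crossing-freeness: one must organise the positions of an arbitrary independent pair of crossing edges inside the triangulation into a manageable number of cases (the antiprism symmetry is essential to keep this short), and for every case identify a concrete triangle and a concretely misplaced vertex that triggers one of \cref{prop:no-fan,prop:no-triag,prop:no-three-outside,prop:vertex-in-triangle}, or else count axis-parallel ports around a vertex. The delicate cases are the ``mixed'' ones, e.g.\ a connector crossing a spoke of the other wheel, where the forbidden subconfiguration is less immediate and one may need to chain two of the properties. A minor preliminary point is to fix the convention that ``combinatorial embedding'' means the rotation system up to reflection (equivalently, of the planarization), so that the target count of exactly two is the correct reading of the statement.
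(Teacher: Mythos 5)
Your proof rests on a false reduction. You claim the statement is equivalent to ``every straight-line RAC drawing of the augmented square antiprism graph is crossing-free,'' so that the two embeddings are just the two (mirror) planar embeddings of a $3$-connected planar triangulation. But that is not what the property asserts, and crossing-freeness is simply not true: the graph admits a straight-line RAC drawing \emph{with} crossings. Concretely, draw the inner square as a diamond with the first apex near its center, the outer square axis-parallel around it, and place the \emph{second} apex also near the center; its four edges to the outer square's corners then each cross one diamond edge, and these crossings are at $90^{\circ}$ (slope $+1$ against slope $-1$). This crossed drawing is exactly the second embedding $\mathcal{E}_2$ that the present paper relies on (see \cref{fig:np-c}); the other is the planar one $\mathcal{E}_1$ (\cref{fig:np-b}). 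So the intended meaning of the property is ``exactly two planarizations are realizable, one planar and one with crossings,'' not ``two reflections of a unique planar embedding.'' Indeed, if all straight-line RAC drawings of the gadget were planar, the NP-hardness machinery built on it would collapse, because it is precisely the crossed embedding $\mathcal{E}_2$ that traps vertices inside the building block and forces connections to the exterior to cross a boundary edge.

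Consequently, your planned case analysis cannot be completed: in the case where the edges of an apex cross the edges of the opposite square, none of \cref{prop:no-fan,prop:no-triag,prop:no-three-outside,prop:vertex-in-triangle} applies and no port-counting contradiction arises, since the configuration is realizable (the counterexample above). Note also that the paper itself does not prove this property; it is quoted from Argyriou, Bekos and Symvonis, whose argument is a direct geometric analysis of where the two apexes can lie relative to the two quadrilaterals, showing that every straight-line RAC drawing planarizes to one of exactly the two embeddings $\mathcal{E}_1$, $\mathcal{E}_2$ — a statement about limiting the possible crossing patterns, not about excluding crossings altogether. A correct self-contained proof would therefore have to characterize which edges can be crossed and how (using \cref{prop:no-fan,prop:no-triag} and the right-angle constraint), rather than trying to rule out crossings entirely.
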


\section{0-bend apRAC graphs}\label{sec:aprac0}

In this section, we focus on properties of \rac{0} graphs. We start with an almost tight bound on the edge-density of \rac{0} graphs - for comparison, recall that $n$-vertex $0$-bend RAC graphs have at most $4n-10$ edges~\cite{DBLP:journals/tcs/DidimoEL11}.

\begin{theorem}\label{thm:aprac0-density}
A \rac{0} graph with $n$ vertices has at most $4n-\sqrt{n}-6$ edges. Also, there is an infinite family of graphs with $4n-2\lfloor\sqrt{n}\rfloor - 7$ edges that admit \rac{0} drawings.
\end{theorem}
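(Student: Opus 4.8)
The claim has two independent halves: an upper bound of $4n-\sqrt{n}-6$ on the number of edges, and an infinite family attaining $4n-2\lfloor\sqrt{n}\rfloor-7$ edges. For the upper bound, note first that $4n-\sqrt{n}-6<4n-10$ for all large $n$, so the statement genuinely strengthens the known bound for $0$-bend RAC graphs and a new ingredient is needed. I would fix a \rac{0} drawing $\Gamma$ of an $n$-vertex graph $G$ with the maximum number $m$ of edges. As $\Gamma$ is $0$-bend, every crossed edge is a single axis-parallel segment, so I would split $E(G)$ into the crossing-free edges $E_0$, the horizontal crossed edges $E_H$, and the vertical crossed edges $E_V$. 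Two horizontal (resp.\ two vertical) segments never cross, hence every crossing is between an edge of $E_H$ and an edge of $E_V$, and the subdrawings of $E_0\cup E_H$ and of $E_0\cup E_V$ are plane; this already gives $|E_0|+|E_H|\le 3n-6$ and $|E_0|+|E_V|\le 3n-6$. To push below $4n-10$ I would planarize $\Gamma$ — each crossing becomes a degree-$4$ dummy vertex with four axis-parallel incident stubs — and run a discharging argument on the resulting simple plane graph via Euler's formula and \cref{prop:no-fan,prop:no-three-outside,prop:vertex-in-triangle}. The extra leverage of axis-parallelism over plain right-angle crossings is that the crossed edges literally form an arrangement of $p=|E_H|$ horizontal and $q=|E_V|$ vertical segments: by \cref{prop:no-fan} the vertical edges crossing a fixed horizontal edge are pairwise non-adjacent, so the crossed edges behave like a true grid whose number of crossings $c$ satisfies $c\le pq$, and in a near-extremal drawing one should be able to force essentially a single such grid block. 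A grid block with $c$ crossings has a boundary of size $\Omega(\sqrt{c})=\Omega(\sqrt{pq})$ along which the planarization cannot be a triangulation, and charging this deficiency back — with matched constants — should yield $m\le 4n-\sqrt{n}-6$.

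I expect this last step to be the main obstacle: turning ``the crossed edges form an axis-parallel grid-like arrangement'' into a loss of exactly $\sqrt{n}$ edges rather than merely $\Theta(\sqrt{n})$. A priori the crossed edges can break into several grid-like blocks and interleave with crossing-free edges, so the deficiency has to be accounted for globally — say through a charging scheme over dummy vertices, their stubs, and the endpoints of crossed edges — instead of block by block, and the hidden constants must be driven down to meet the claimed bound exactly.

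For the lower bound I would exhibit an explicit grid-like \rac{0} drawing: an oblique, nearly triangulated ``scaffold'' drawn with slope-$\pm1$ edges, which stay crossing-free, through which horizontal and vertical edges are threaded so that every crossing is a right angle between two axis-parallel segments, as in a straightened variant of the dense constructions known for $0$-bend RAC graphs. An Euler-type count shows this reaches about $4n-\mathcal{O}(1)$ edges, and the only places where an edge has to be dropped in order to keep all crossed edges axis-parallel lie along a single ``staircase'' boundary of length $\Theta(\sqrt{n})$. Tuning the shape of that boundary, and choosing $n$ of a convenient form, should make exactly $2\lfloor\sqrt{n}\rfloor+1$ edges disappear relative to $4n-6$, giving a family with $4n-2\lfloor\sqrt{n}\rfloor-7$ edges. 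The fiddly points are to certify that no slope-$\pm1$ edge is ever involved in a crossing and that threaded edges of neighboring cells create no further, non-axis-parallel crossings; this bookkeeping is also what pins down the exact additive constant.
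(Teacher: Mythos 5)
Your upper-bound argument is not actually a proof: after the (correct) split into crossing-free, horizontal-crossed and vertical-crossed edges, everything that would take you below $4n-10$ is deferred to a planarization-plus-discharging plan whose central claims are unsubstantiated. Nothing forces the crossed edges into ``essentially a single grid block'' in a near-extremal drawing, the blocks can interleave with crossing-free edges, and the conversion of an $\Omega(\sqrt{pq})$ boundary into a loss of exactly $\sqrt{n}$ edges is precisely the step you concede you cannot do. The missing idea is much simpler and does not involve crossings or discharging at all: the $n$ vertices occupy distinct points, so they use at least $\sqrt{n}$ distinct $x$- or $y$-coordinates, say $y$-coordinates; the set $E_h$ of \emph{all} horizontal edges (crossed or not) forms a vertex-disjoint union of paths with at least one component per used $y$-coordinate, hence $|E_h|\leq n-\sqrt{n}$; and since every crossing in a \rac{0} drawing pairs a horizontal with a vertical segment, $G-E_h$ is drawn crossing-free and has at most $3n-6$ edges, giving $4n-\sqrt{n}-6$ directly. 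This coordinate-counting step is absent from your sketch, and without it (or a completed discharging scheme with exact constants) the first half of the theorem is unproven.

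The lower bound is likewise only a sketch with an unverified count, and the count as envisioned does not come out right. A single grid-like block---a slope-$\pm1$ scaffold of a $k\times k$ grid with one axis-parallel crossing pair per cell---has $n=k^2$ vertices and only $4k^2-6k+2\approx 4n-6\lfloor\sqrt{n}\rfloor$ edges: the deficiency is proportional to the \emph{whole} perimeter, and no ``tuning of a staircase'' inside one block reduces it to $2\lfloor\sqrt{n}\rfloor+1$. The paper's construction closes most of this gap by adding two extremal apex vertices $N$ and $S$, each joined by crossing-free edges to $2k-1$ consecutive boundary vertices of the grid, together with the edge $(N,S)$; with $n=k^2+2$ this yields exactly $4n-2\lfloor\sqrt{n}\rfloor-7$. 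Your proposal has no analogue of these apex vertices, so even after the ``fiddly bookkeeping'' you mention, the stated constant would not be met.
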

\begin{proof}
For the upper bound consider any \rac{0} drawing $\Gamma$ of a graph~$G$ with $n$ vertices.
As a $(k \times k)$-grid has only $k^2$ grid points, we may assume without loss of generality that the vertices of $G$ use at least $\sqrt{n}$ different $y$-coordinates in $\Gamma$.
It follows that the subgraph $G_h$ of $G$ defined by the set $E_h$ of all horizontal edges of $\Gamma$ is a forest of paths with at least $\sqrt{n}$ components; at least one for each used $y$-coordinate.
Thus $|E_h| \leq n - \sqrt{n}$.
As $G - E_h$ is crossing-free in $\Gamma$, it has at most $3n-6$ edges, giving the desired upper bound of $4n - \sqrt{n}-6$ edges for $G$.

For the lower bound, consider the construction shown in \cref{fig:sl-lower}. For any even $k > 0$, construct a $k \times k$ grid graph $H_k$ which contains a pair of crossing edges in every quadrangular face. 
Let $G_k$ be the graph obtained from $H_k$ by adding two extremal adjacent vertices $N$ and $S$ connected to $2k-1$ consecutive boundary vertices of $H_k$ each (refer to the blue edges in \cref{fig:sl-lower} and observe that the edge between $N$ and $S$ can be added by moving $N$ upwards and to the right and $S$ downwards and to the right of $H_k$). If we denote by $n$ the number of vertices of $G_k$, then $n = k^2 + 2$, $k = \sqrt{n-2}$ and thus $m = 4n-2\lfloor \sqrt{n} \rfloor -7$.
\end{proof}
Since there exist $n$-vertex 0-bend RAC graphs with $4n-10$ edges, Corollary~\ref{cor:separatating} follows from \cref{thm:aprac0-density}. 

\begin{corollary}\label{cor:separatating}
The class of \rac{0} graphs is properly contained in the class of 0-bend RAC graphs.
\end{corollary}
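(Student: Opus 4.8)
The plan is to treat the two aspects of the claimed proper containment separately, noting that the strictness is an immediate consequence of the edge-density gap established in \cref{thm:aprac0-density}. For the inclusion of \rac{0} graphs in $0$-bend RAC graphs I would simply unwind the definitions: a \rac{0} drawing is a straight-line drawing in which every crossing occurs at a right angle, subject only to the additional requirement that the two crossing segments be axis-parallel. Dropping that extra requirement, such a drawing is in particular a $0$-bend RAC drawing, so every graph admitting a \rac{0} drawing admits a $0$-bend RAC drawing. This direction needs no further work.

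For strictness I would argue by counting edges. By \cref{thm:aprac0-density}, every \rac{0} graph on $n$ vertices has at most $4n-\sqrt{n}-6$ edges. On the other hand, the construction of Didimo, Eades and Liotta~\cite{DBLP:journals/tcs/DidimoEL11} yields, for infinitely many $n$, a $0$-bend RAC graph on $n$ vertices with exactly $4n-10$ edges. Since $4n-10 > 4n-\sqrt{n}-6$ holds exactly when $\sqrt{n} > 4$, i.e.\ when $n \ge 17$, fixing any such $n \ge 17$ for which the $4n-10$ bound is realized produces a graph that is $0$-bend RAC yet has strictly more edges than any $n$-vertex \rac{0} graph can have; in particular it is not \rac{0}. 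This witnesses that the containment is proper.

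The argument is essentially free of obstacles; the only point worth a moment's attention is to confirm that the $4n-10$ bound for $0$-bend RAC graphs is attained for at least one value $n \ge 17$ rather than only for small instances, which is fine since the optimal RAC construction scales to arbitrarily large $n$. Everything else is definitional or elementary arithmetic. (For a single concrete separating example one can instead appeal to the fact, recorded in the introduction, that $K_6$ minus an edge is $0$-bend RAC but not \rac{0}; the density-based argument above, however, has the advantage of staying entirely within the present section.)
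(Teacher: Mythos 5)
Your proposal is correct and matches the paper's argument: the paper likewise derives \cref{cor:separatating} from \cref{thm:aprac0-density} together with the existence of $n$-vertex $0$-bend RAC graphs with $4n-10$ edges, the inclusion itself being definitional. Your extra arithmetic ($n\ge 17$) and the alternative $K_6$-minus-an-edge witness are fine but not needed beyond what the paper states.
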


\noindent In the following theorem, we show that $K_6$ minus one edge is the smallest graph that is $0$-bend RAC but not \rac{0}.

\begin{theorem}\label{lem:different}
The complete graph on $6$ vertices minus an edge $e$ is the minimal example separating the classes of $0$-bend RAC and \rac{0}.
\end{theorem}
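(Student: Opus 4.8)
The plan is to establish two things: (i) $K_6 - e$ is not \rac{0}, and (ii) every proper subgraph of $K_6-e$, as well as every graph on at most $5$ vertices that is $0$-bend RAC, is also \rac{0}; together with the fact that $K_6-e$ \emph{is} $0$-bend RAC (which is known, e.g.\ from the drawing witnessing $4n-10$ edges for $n=6$), this yields minimality. Actually, to argue minimality cleanly, I would show that $K_6-e$ is the unique minimal forbidden graph by checking that it is not \rac{0} but every graph obtained from it by deleting a vertex or an edge is \rac{0}; since $K_6-e$ has $13$ edges and $6$ vertices while no other $5$-vertex graph separates the classes, this suffices. The key non-trivial direction is (i).

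For (i), suppose for contradiction that $\Gamma$ is a \rac{0} drawing of $G = K_6 - e$, where $e = (a,b)$, so every vertex other than $a,b$ has degree $5$ and $a,b$ have degree $4$. First I would exploit the high connectivity: $G$ contains many triangles and many $K_4$'s, so \cref{prop:no-fan,prop:no-triag,prop:no-three-outside,prop:vertex-in-triangle} heavily constrain which triangles can have a vertex inside them and how edges may cross triangle boundaries. The standard starting point for $0$-bend RAC arguments is to look at the outer boundary of $\Gamma$: consider the convex hull of the $6$ points. A case analysis on how many vertices lie on the hull (3, 4, 5, or 6) is natural. In each case, an interior vertex $w$ sees several triangles of $G$ for which it lies inside and is adjacent to all three corners and to further vertices, so \cref{prop:vertex-in-triangle} and \cref{prop:no-three-outside} apply and force the offending triangles to be right-angled with axis-parallel legs. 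Then one counts axis-parallel segments incident to a single vertex: each vertex has only $4$ orthogonal ports ($N,E,S,W$), but the combination of forced axis-parallel legs and forced axis-parallel crossing segments around a degree-$5$ vertex exceeds this budget, a contradiction. The role of the missing edge $e$ is exactly to make this budget argument just barely fail for $K_6$ itself but succeed once any further edge is present? no --- rather, $K_6$ is not even $0$-bend RAC, so the comparison is with $K_6-e$, and the point is that $K_6-e$ still has enough edges to overflow the port budget once the planar part $G - E_h$ (at most $3\cdot 6 - 6 = 12$ edges) is taken into account together with $|E_h|\le n - \sqrt n$ type reasoning; here $13 > 12$, so at least one horizontal edge exists, and then a refined local argument around its endpoints gives the contradiction.

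For (ii), I would argue that $K_6 - 2e$ (both ways of removing a second edge, adjacent or non-adjacent to $e$) and $K_5$ minus any set of edges are \rac{0} by exhibiting explicit axis-parallel drawings --- these are small and can be drawn by hand, e.g.\ placing $K_5$ as a convex pentagon with the five ``diagonals'' routed so that crossing pairs are axis-parallel, and similarly for the $6$-vertex subgraphs. Since every proper subgraph of $K_6-e$ is a subgraph of one of these, monotonicity of \rac{0} under edge and vertex deletion finishes the minimality claim.

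The main obstacle I expect is the case analysis in (i): ruling out \emph{every} convex-hull configuration and every placement of interior vertices without gaps, since $0$-bend RAC drawings of dense graphs are notoriously fiddly. I would try to compress this by first proving a lemma that in any \rac{0} drawing of a graph containing $K_5$, at least one hull vertex configuration is impossible, thereby reducing to few cases; and by using \cref{prop:no-three-outside,prop:vertex-in-triangle} aggressively to eliminate configurations with an interior vertex before ever doing port counting. The port-budget contradiction itself should be short once the geometry is pinned down.
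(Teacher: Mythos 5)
The minimality half of your plan (explicit \rac{0} drawings of $K_5$ and of the two non-isomorphic graphs $K_6$ minus two edges, plus monotonicity under subgraphs) is exactly what the paper does and is fine. The genuine gap is in the core direction, that $K_6-e$ admits no \rac{0} drawing: what you give there is not a proof but a programme. The convex-hull case analysis (hull of size $3,4,5,6$, placement of interior vertices) is never carried out, the decisive ``port-budget'' contradiction is only asserted (``exceeds this budget'' / ``should be short once the geometry is pinned down''), and the paragraph on the role of the missing edge is both confused and arithmetically wrong: $K_6-e$ has $14$ edges, not $13$, and the inequality you want is $14>3n-6=12$, which gives at least \emph{two} crossings, not merely the existence of a horizontal edge; the ``refined local argument around its endpoints'' that is supposed to finish is left unspecified. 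As written, a referee could not reconstruct the contradiction, and it is not clear the triangle properties (\cref{prop:no-triag,prop:no-three-outside,prop:vertex-in-triangle}) alone pin down the geometry enough for your count to close in every hull configuration.

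For comparison, the paper avoids all of this with a short local argument: since $cr(K_6-e)\geq 2$, the pigeonhole principle yields a vertex $u$ incident to two crossed edges $(u,v)$ and $(u,w)$; in a \rac{0} drawing both must be axis-parallel, so they are either collinear, which would require the two crossing partners and $u,v,w$ to be seven distinct vertices (impossible on six), or perpendicular at $u$, in which case the near-completeness of $K_6-e$ forces an edge whose presence creates a fan crossing, contradicting \cref{prop:no-fan}. If you want to salvage your outline, the fastest repair is to replace the hull analysis and port counting by this crossing-count-plus-pigeonhole step; otherwise you must actually execute the case analysis you describe, which is considerably more work than the statement requires.
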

\begin{proof}
Let $G = K_6-e$. \cref{fig:ap-rac-a} establishes that $G$ is RAC.
Suppose for a contradiction that $G$ admits a \rac{0} drawing $\Gamma$. Since $G$ has $6$ vertices and $14$ edges, it follows that $cr(G) \geq 2$. By the pigeonhole principle, there exists a vertex $u$ of $G$ which is incident to (at least) two crossing edges $(u,v)$ and $(u,w)$ in $\Gamma$. Denote by $e_1$ and $e_2$ the edges that cross $(u,v)$ and $(u,w)$, respectively. If $u$, $v$ and $w$ are colinear, the endpoints of $e_1$ and $e_2$ as well as vertices $u$,$v$ and $w$ are necessarily distinct, a contradiction since $G$ contains only six vertices. Otherwise, one of $\{(u,v),(u,w)\}$ is horizontal, while the other one is vertical; see \cref{fig:ap-rac-c}. Since $G$ is a complete graph minus one edge, at least one of the blue dotted edges is present in $\Gamma$ which is impossible by \cref{prop:no-fan}. Thus we conclude that $G$ is not \rac{0}.
To show the minimality of $G$, we first observe that every graph on five vertices is a subgraph of $K_5$, which admits a RAC drawing with exactly one crossing~\cite{DBLP:journals/tcs/DidimoEL11}, hence it is also a \rac{0} drawing after an appropriate rotation. To conclude, we provide \rac{0} drawings for the two non-isomorphic graphs which can be obtained by removing exactly two edges from $K_6$. In \cref{fig:ap-rac-d} two adjacent edges are removed, while in \cref{fig:ap-rac-e} two independent edges are removed from $K_6$.
\end{proof}

\begin{figure}[h]
    \centering
    \begin{subfigure}[b]{.24\textwidth}
    \centering
    \includegraphics[scale=0.7,page=1]{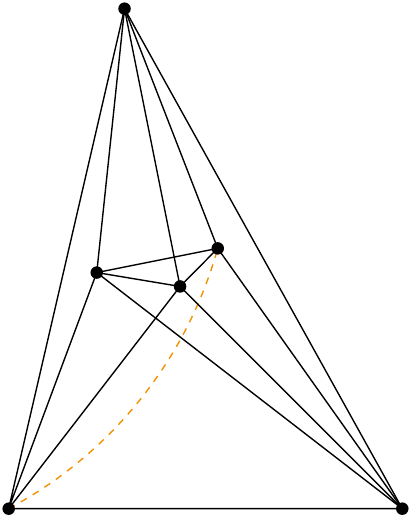}
    \subcaption{}
    \label{fig:ap-rac-a}
    \end{subfigure}
    \hfil
    \begin{subfigure}[b]{.24\textwidth}
    \centering
    \includegraphics[scale=1,page=2]{figures/K6-e.pdf}
    \subcaption{}
    \label{fig:ap-rac-c}
    \end{subfigure}
     \hfil
    \begin{subfigure}[b]{.24\textwidth}
    \centering
    \includegraphics[scale=0.7,page=4]{figures/K6-e.pdf}
    \subcaption{}
    \label{fig:ap-rac-d}
    \end{subfigure}
     \hfil
    \begin{subfigure}[b]{.24\textwidth}
    \centering
    \includegraphics[scale=0.7,page=5]{figures/K6-e.pdf}
    \subcaption{}
    \label{fig:ap-rac-e}
    \end{subfigure}
    \caption{
    Illustrations for the proof of \cref{lem:different}. Missing edges of $K_6$ are indicated in orange.}
    \label{fig:ap-rac}
\end{figure}

\noindent We conclude this section by studying the recognition problem of whether a graph is \rac{0}.

\begin{figure}[t]
\centering
    \begin{subfigure}[b]{.27\textwidth}
    \centering
    \includegraphics[scale=1,page=2]{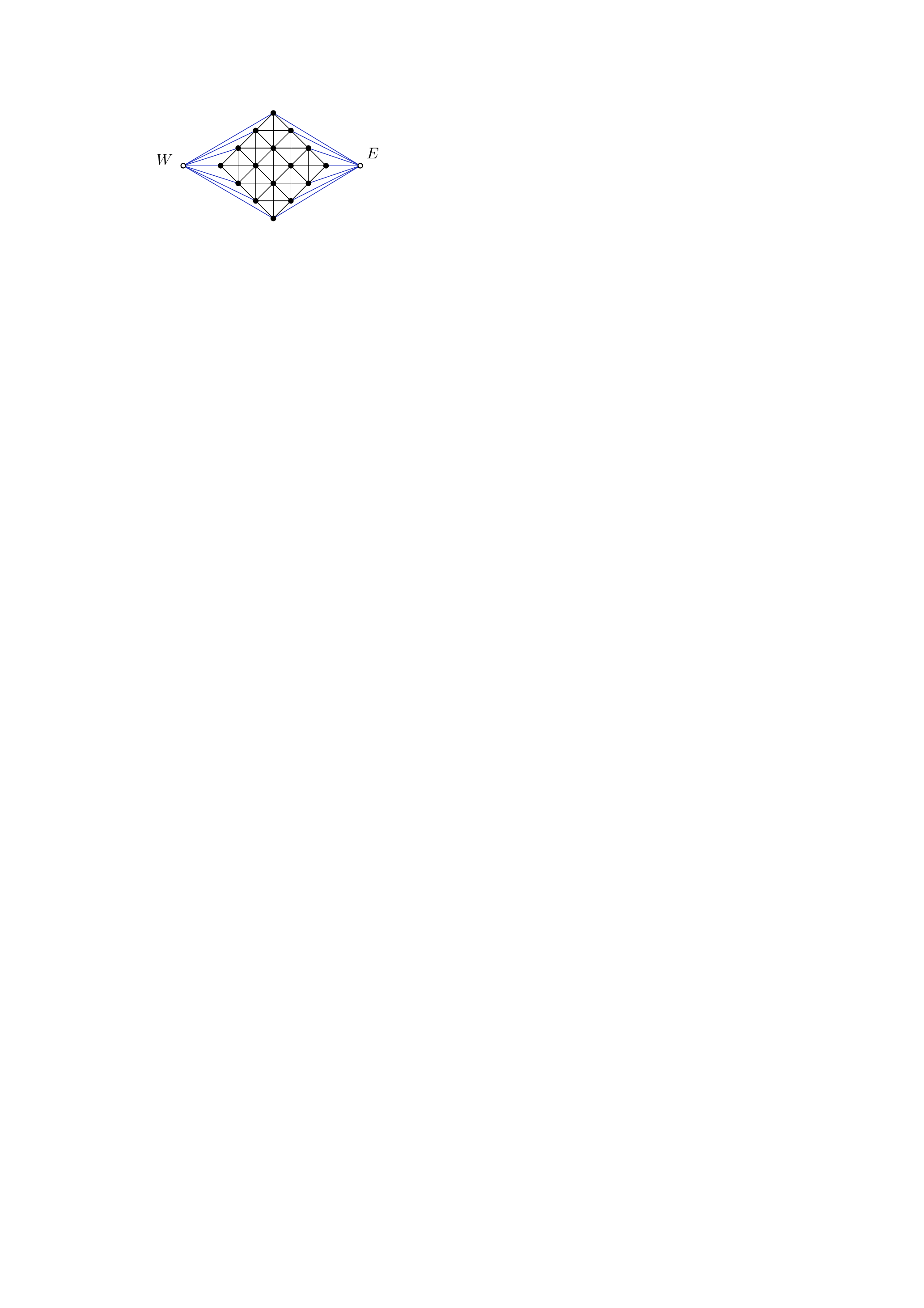}
    \subcaption{}
    \label{fig:sl-lower}
    \end{subfigure}
    \hfil
    \begin{subfigure}[b]{.51\textwidth}
    \centering
\includegraphics[scale=1,page=2]{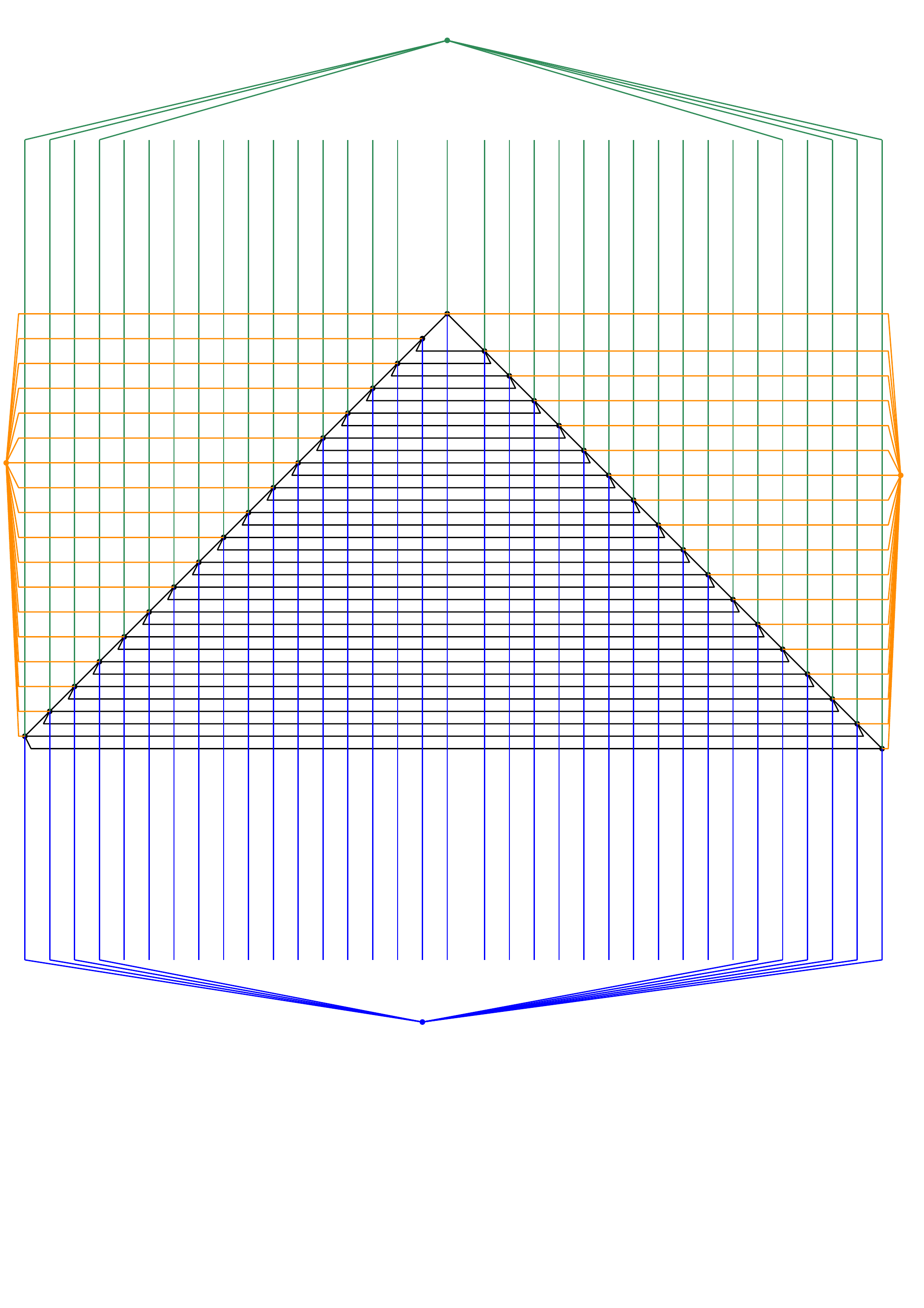}
    \subcaption{}
    \label{fig:aprac1-full}
    
    \end{subfigure}

    \caption{
    (a)~Lower bound construction for \rac{0}.
    (b)~Lower bound construction for \rac{1}.}
    \label{fig:ap0-ap1-lower}
\end{figure}

\begin{theorem}\label{thm:rac-0-np-hard}
It is NP-hard to decide whether a given graph is \rac{0}.
\end{theorem}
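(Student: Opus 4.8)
The plan is to reduce from a known NP-hard problem whose instances can be encoded using rigid RAC-gadgets, exploiting \cref{prop:antiprism-two-embbeddings}: the augmented square antiprism graph has essentially only two straight-line RAC embeddings, so it behaves like a Boolean variable once we force it into any \rac{0} drawing. A natural candidate is \textsc{Planar 3-SAT} (or \textsc{Planar Monotone 3-SAT}, or \textsc{Positive 1-in-3-SAT} restricted to planar incidence graphs), since the planarity of the clause--variable incidence graph lets us lay out the gadgets in the plane with controlled interaction. First I would build a \emph{variable gadget} from a copy of the augmented square antiprism graph (see \cref{fig:np-b}); by \cref{prop:antiprism-two-embbeddings} its two admissible embeddings are mirror images, which we interpret as the truth values \texttt{true} and \texttt{false}. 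The gadget of \cref{fig:np-d} is then used to propagate a chosen embedding along ``wires'' and into \emph{clause gadgets}, which are designed so that a \rac{0} drawing of the whole graph exists if and only if in each clause at least one (resp.\ exactly one) incident variable gadget is in the satisfying state.

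The key steps, in order, are: (1) Verify that a single variable gadget is \rac{0} and that \emph{every} \rac{0} drawing of it restricts to one of the two embeddings of \cref{prop:antiprism-two-embbeddings}, using \cref{prop:no-fan}, \cref{prop:no-triag}, \cref{prop:no-three-outside} and \cref{prop:vertex-in-triangle} to rule out any additional freedom that the axis-parallel crossing restriction might a priori allow (this is where the \rac{0} model, not just the RAC model, is needed). (2) Design the wire/propagation gadget so that adjacent gadgets are forced into consistent states; again the forbidden configurations of \cref{sec:preliminaries} are the workhorses for showing rigidity. (3) Design the clause gadget and prove the biconditional: a satisfying assignment yields a \rac{0} drawing by composing the gadget drawings along a planar layout of the incidence graph, and conversely any \rac{0} drawing induces an assignment satisfying every clause. (4) Check that the reduction is polynomial: each gadget has constant size, and the number of gadgets is linear in the formula size.

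The main obstacle I expect is step (1)–(2): proving \emph{rigidity}. \cref{prop:antiprism-two-embbeddings} is stated for straight-line RAC drawings and gives only two \emph{combinatorial} embeddings, but a \rac{0} drawing could in principle place the gadget's vertices in many metrically different ways realizing the same embedding, and the wires must transmit the bit faithfully despite this metric slack. The delicate part is engineering the gadget interfaces — which ports (in the sense of the $\{N,E,S,W\}$ notation introduced in \cref{sec:preliminaries}) each boundary vertex of a gadget must use, and how two gadgets share edges — so that the only consistent global choice of embeddings corresponds exactly to a satisfying assignment, while still guaranteeing that satisfying assignments \emph{do} admit a drawing. Ruling out ``cheating'' drawings that violate the intended semantics, using only the combinatorial forbidden-subconfiguration lemmas available in this model, is the crux; once the gadgets are shown rigid, assembling them along the planar incidence structure and verifying both directions of the biconditional is routine.
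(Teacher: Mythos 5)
There is a genuine gap: what you give is a programme, not a proof, and the part you yourself flag as ``the crux'' (steps (1)--(2), rigidity of variable/wire/clause gadgets in the \rac{0} model, plus the engineering of their interfaces) is exactly the part that is left undone. No gadget is actually constructed, no \rac{0} drawing of a gadget is exhibited, and the biconditional is only asserted. In particular, the realizability direction is nontrivial in this model: a satisfying assignment must yield a drawing in which \emph{every} crossing is between axis-parallel segments, and nothing in your sketch guarantees that your composed gadget drawings have this property. There is also a conceptual miscue in how you plan to encode truth values: \cref{prop:antiprism-two-embbeddings} gives two combinatorial embeddings of the augmented square antiprism in the straight-line RAC setting, but a usable building block must be \emph{rigid} (unique embedding), not bistable; treating the two mirror embeddings of the bare antiprism as the Boolean bit is an unverified design choice, and you would still have to show that the antiprism itself, and each larger gadget, even admits a \rac{0} drawing at all.

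The paper's route sidesteps almost all of this. It does not build a reduction from scratch: it reuses the existing 3-SAT reduction of Argyriou, Bekos and Symvonis for straight-line RAC graphs, observes that in that construction every crossing \emph{outside} the building blocks is already between axis-parallel edges, and that the only obstruction is that the original building block is not \rac{0}. It therefore replaces the building block by the graph of \cref{fig:np-d} (an augmented square antiprism plus a $4$-cycle attached to four of its vertices); the $4$-cycle kills one of the two embeddings from \cref{prop:antiprism-two-embbeddings}, giving the required unique embedding, and the new block is verified to be \rac{0} and to retain the pass-through-blocking and extendability properties of the original. With that single substitution the correctness of the old reduction transfers verbatim. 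So the missing idea in your proposal is precisely this reuse-and-patch strategy (or, alternatively, the full gadget constructions and rigidity proofs that your from-scratch plan would require); as written, the proposal does not constitute a proof.
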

\begin{proof}
In order to prove the statement, we adjust the NP-hardness reduction (from 3-SAT) for the general case of straight-line RAC graphs introduced in \cite{DBLP:journals/jgaa/ArgyriouBS12}.
Based on the so-called \emph{augmented square antiprism graph}, which by \cref{prop:antiprism-two-embbeddings} has two combinatorial embeddings in the RAC setting, the construction of the clause-gadgets, the variable-gadgets as well as the connections between them is based on a basic \emph{building block} having the following properties: 
$(i)$ It has a unique embedding, 
$(ii)$ there are four vertices properly contained in its interior, which can be connected to vertices in its exterior by crossing a single boundary edge, 
$(iii)$ no edge can (completely) pass through it without forming a fan crossing and 
$(iv)$ it can be extended horizontally or vertically maintaining the aforementioned properties. 
Unfortunately, even though the augmented square antiprism graph is in fact \rac{0}, the building block of \cite{DBLP:journals/jgaa/ArgyriouBS12} is not.

\begin{figure}[ht]
    \centering
    \begin{subfigure}[b]{.32\textwidth}
    \centering
    \includegraphics[scale=1,page=3]{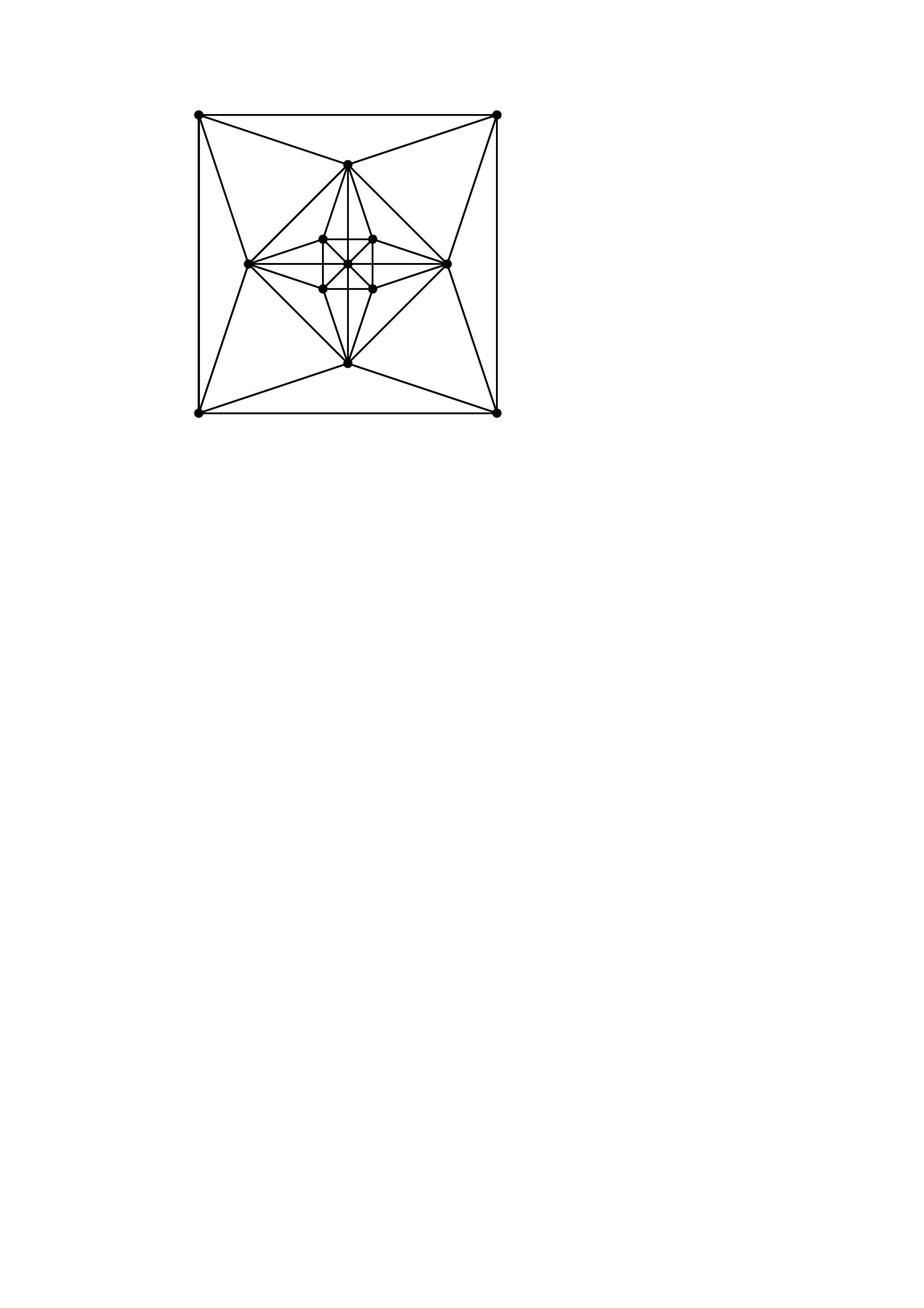}
    \subcaption{}
    \label{fig:np-b}
    \end{subfigure}
    \hfil
    \begin{subfigure}[b]{.32\textwidth}
    \centering
    \includegraphics[scale=1,page=2]{figures/gadget2-violet.pdf}
    \subcaption{}
    \label{fig:np-c}
    \end{subfigure}
     \hfil
    \begin{subfigure}[b]{.32\textwidth}
    \centering
    \includegraphics[scale=1,page=4]{figures/gadget2-violet.pdf}
    \subcaption{}
    \label{fig:np-d}
    \end{subfigure}
    \caption{
    Illustrations for the proof of \cref{thm:rac-0-np-hard}.}
    \label{fig:np}
\end{figure}

In the following, we prove that the graph $G$ of \cref{fig:np-d} satisfies properties $(i)-(iii)$; thus, it can act as the building block for our reduction. Observe that $G$ is composed of an augmented square antiprism graph $H$ and a $4$-cycle $C$ connected to the blue vertices of $H$. Recall that, by \cref{prop:antiprism-two-embbeddings}, $H$ has two combinatorial embeddings $\mathcal{E}_1$ and $\mathcal{E}_2$ in the 0-bend RAC setting and thus at most two in the \rac{0} setting; refer to \cref{fig:np-b,fig:np-c}, respectively. Since only axis-parallel edges are involved in crossings and since each crossing-free edge of $\mathcal{E}_1$ and $\mathcal{E}_2$ is not axis parallel, it follows that the crossing-free edges of $\mathcal{E}_1$ and $\mathcal{E}_2$ cannot be crossed in a \rac{0} drawing of $G$. 
This implies that the four blue vertices of $H$ which are connected to $C$ have to lie on a common face of the subgraph of $H$ induced by the crossing-free edges in $\mathcal{E}_1$ or $\mathcal{E}_2$. This is impossible in $\mathcal{E}_1$ and unique in  $\mathcal{E}_2$, as illustrated in \cref{fig:np-d}. This is enough to guarantee $(i)$.
Further, the figure clearly asserts that $(ii)$ and $(iii)$ are also guaranteed.
Property $(iv)$ can be guaranteed in the exact same way as in the original paper~\cite{DBLP:journals/jgaa/ArgyriouBS12}.

Since our building block is \rac{0} and since any crossing that does not involve a building block appears between axis-parallel edges in the original reduction, it follows that the constructed drawing is \rac{0} if and only if the input $3$-SAT formula is satisfiable.
\end{proof}

\section{1-bend apRAC graphs}\label{sec:aprac1}

In this section, we will establish an upper bound and an almost matching lower bound for the class of \rac{1} graphs. Recall that $n$-vertex $1$-bend RAC graphs have at most $5.5n-10$ edges~\cite{DBLP:journals/tcs/AngeliniBFK20}.
\begin{theorem}\label{thm:ap-1}
A \rac{1} graph with $n$ vertices has at most $5n-8$~edges. Also, there is an infinite family of graphs with $5n-16$ edges that admit \rac{1} drawings.
\end{theorem}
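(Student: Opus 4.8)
The plan is to mirror the structure of the proof of \cref{thm:aprac0-density}, refining the counting to exploit the extra degrees of freedom offered by a single bend. For the upper bound, I would fix a \rac{1} drawing $\Gamma$ of an $n$-vertex graph $G$ and classify each edge according to the ports it uses: each edge has one or two straight segments, and the segment(s) participating in crossings must be axis-parallel, while the crossing-free segment may be oblique. The key observation is that the subgraph $G_{\mathrm{cross}}$ consisting of all edges that are actually crossed in $\Gamma$ decomposes into a part drawn with axis-parallel segments, and that one can still partition the ``axis-parallel mass'' into a horizontal part and a vertical part. As in \cref{thm:aprac0-density}, the edges whose incident segment at a vertex is horizontal (resp.\ vertical) form, at each fixed $y$-coordinate (resp.\ $x$-coordinate), a structure of bounded complexity; since there are at least $\sqrt{n}$ distinct $y$-coordinates or at least $\sqrt{n}$ distinct $x$-coordinates, I would extract a sublinear saving. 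Bounding the remaining crossing-free subgraph by $3n-6$ (planarity) and carefully accounting for the bend-induced doubling of segments should yield $5n-8$; the precise constant will come from checking that each bend contributes at most one ``free'' direction and that the planar part and the horizontal/vertical parts overlap in a controlled way.

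For the lower bound, I would present an explicit infinite family achieving $5n-16$, constructed in the same spirit as \cref{fig:aprac1-full}: take a large planar skeleton (e.g.\ a grid-like or stacked-triangulation backbone giving roughly $3n$ edges) and superimpose additional edges routed with one bend so that each added edge crosses the skeleton or other added edges only at axis-parallel segments meeting at right angles. The blue ``NS-type'' apex vertices, as in the $0$-bend construction, let one attach a linear number of extra edges; with one bend available, each such apex can service more boundary vertices, pushing the coefficient from $4$ up to $5$. I would verify (i) that all crossings in the construction are between axis-parallel segments crossing orthogonally, and (ii) the edge count $m = 5n - 16$ by a direct vertex/edge tally on the parametrized family.

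The main obstacle I anticipate is the upper bound, specifically pinning down the exact additive constant and making the case analysis of port types airtight: unlike the $0$-bend case, an edge may ``spend'' its axis-parallel segment only near one endpoint and be oblique near the other, and two edges can cross using segments that are not incident to any of their endpoints (a crossing in the ``middle'' of a bent edge). Handling these mid-edge crossings — ensuring the horizontal/vertical partition still applies and that the $\sqrt{n}$-coordinate argument is not lost — is the delicate point. A secondary subtlety is that \cref{prop:no-fan} and the other $0$-bend properties no longer apply verbatim once bends are allowed, so I would need an analogue (or to argue that the relevant forbidden configurations still cost an edge in the count). If the clean $\sqrt{n}$ saving proves hard to salvage in the $1$-bend setting, the fallback is the stated bound $5n-8$ with only a constant saving, which is what the theorem claims.
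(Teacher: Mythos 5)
The upper bound is where your proposal has a genuine gap. You plan to transplant the $\sqrt{n}$-coordinate argument from \cref{thm:aprac0-density}, but that argument is neither needed nor used here: the claimed bound $5n-8$ involves only a constant additive saving, and your ``fallback'' to it is an assertion, not an argument --- nowhere do you specify a counting that produces the coefficient $5$ or the constant $-8$. The paper's proof instead partitions the edges by the types of their (at most two) segments: horizontal, vertical, oblique. Since every crossing in an axis-parallel RAC drawing pairs a horizontal with a vertical segment, the edges containing \emph{no} vertical segment cannot cross one another, so they induce a planar subgraph with at most $3n-6$ edges. Every remaining edge contains a vertical segment, and --- this is the structural point you miss --- with at most one bend every segment of an edge is incident to one of its endpoints, so each such edge occupies a vertical port at some vertex; a vertex has only two vertical ports, giving at most $2n$ such edges, improved to $2n-2$ because the upward port at the topmost vertex and the downward port at the bottommost vertex can never participate in a crossing and can be replaced by oblique segments. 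Summing yields $5n-8$. Your stated worry about crossings ``in the middle of a bent edge,'' on segments incident to no endpoint, cannot occur in the $1$-bend setting (it is a $2$-bend phenomenon); raising it indicates that the simplification which drives the whole count was not in place, and likewise \cref{prop:no-fan} plays no role here, so no analogue of it needs to be developed.

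For the lower bound your sketch points at the right kind of construction (apex vertices attached to a dense planar skeleton, in the spirit of \cref{fig:aprac1-full}), but you provide neither an explicit parametrized family nor a verified edge count. The paper's family places $n-4$ vertices on the two legs of an isosceles triangle, joins them by a cycle and an inner $y$-monotone path, connects apices $N$ and $S$ to all $n-4$ vertices, $W$ and $E$ to the vertices of their respective legs (plus one extra edge to $W$), and adds the six edges among the four apices, which tallies to $(n-4)+(n-6)+3(n-4)+6=5n-16$. As written, the proposal establishes neither direction of the theorem, although the lower-bound half could likely be completed along the lines you indicate.
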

\begin{proof}
For the upper bound, consider a \rac{1} drawing $\Gamma$ of an $n$-vertex graph $G$. 
Each edge segment in $\Gamma$ is either horizontal (h), vertical (v) or oblique~(o). For $x,y \in \{h,v,o\}$, let $E_{xy}$ be the edges of $G$ with two edge segments of type $x$ and $y$. Then, $E_{hv}$, $E_{ho}$, $E_{vo}$ and $E_{oo}$ form a partition of the edge-set of $G$, assuming that edges that consist of only one $h$-, $v$- or $o$-segment are counted towards $E_{ho}$, $E_{vo}$ and $E_{oo}$, respectively.
By construction, any crossing involves exactly one vertical and one horizontal segment. Hence, the subgraph of $G$ induced by $E_{ho} \cup E_{oo}$ is planar and contains at most $3n-6$ edges.
Further, as every segment is incident to a vertex and since any vertex is incident to at most two vertical segments, we have 
$ |E_{vo} \cup E_{hv}| \leq 2n$.
We can assume that the topmost vertex $v_t$ is incident to at most one vertical edge-segment, since the edge segment incident to $v_t$ that points upwards cannot be involved in a crossing with a horizontal edge-segment. Otherwise, the endpoint incident to this edge segment would contradict the fact that $v_t$ is topmost in $\Gamma$. Hence, it can be replaced by a steep oblique edge-segment without introducing new crossings. Analogous observations can be made for the bottommost vertex in $\Gamma$, which implies that $ |E_{vo} \cup E_{hv}| \leq 2n-2$.
Thus, 
$|E| =|E_{ho}| + |E_{vo}| + |E_{hv}| + |E_{oo}| \leq 5n-8$.

Our lower bound construction is as follows; see \cref{fig:aprac1-full}. For $n \geq 7$, we arrange $n-4$ vertices forming a cycle  along the two legs of an isosceles triangle with a horizontal base (outer black edges), such that the left leg has $\lfloor \frac{n-4}{2}\rfloor$ vertices while the right one has $\lceil \frac{n-4}{2}\rceil$. These $n-4$ vertices are further joined by a $y$-monotone path of $n-7$ edges (inner black edges).
Two extremal vertices $N$ and $S$ above and below the triangle are connected to all $n-4$ vertices (orange edges). Similarly, two extremal vertices $W$ and $E$ to the left and right of the triangle are connected to all vertices of the left and right legs of the triangle respectively (blue edges); the topmost vertex of the right leg is also connected to $W$. Finally, we add six edges between the extremal vertices, which gives $n-4 + n-6 + 3(n-4)+6 = 5n-16$~edges.
\end{proof}

\noindent Since there exists $1$-bend RAC graphs with $5.5n-72$ edges~\cite{DBLP:journals/tcs/AngeliniBFK20}, the following corollary is immediate.
\begin{corollary}\label{cor:separatating-bend1}
The class of \rac{1} graphs is a proper subclass of the one of $1$-bend RAC graphs.
\end{corollary}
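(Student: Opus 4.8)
\textbf{Proof proposal for Corollary~\ref{cor:separatating-bend1}.}

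The plan is to derive the statement purely by a counting argument, contrasting the edge-density upper bound just proved for \rac{1} graphs with a known lower-bound construction for $1$-bend RAC graphs. First I would recall that \cref{thm:ap-1} shows every $n$-vertex \rac{1} graph has at most $5n-8$ edges, whereas the cited result of Angelini et al.~\cite{DBLP:journals/tcs/AngeliniBFK20} provides an infinite family of $1$-bend RAC graphs with $5.5n-72$ edges. For all sufficiently large $n$ we have $5.5n-72 > 5n-8$, i.e., $0.5n > 64$, so any such graph on more than $128$ vertices is $1$-bend RAC but exceeds the edge bound for \rac{1} graphs, and hence cannot be \rac{1}.

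The second ingredient is the containment direction: every \rac{1} graph is, by definition, a $1$-bend RAC graph, since an apRAC drawing is in particular a RAC drawing (the axis-parallel restriction only constrains the segments involved in crossings, not the crossing angles). Combining this with the separating family above yields that \rac{1} graphs form a \emph{proper} subclass of $1$-bend RAC graphs. The only mild subtlety worth a sentence is making explicit that the graphs in the dense $1$-bend RAC family can be taken with arbitrarily many vertices (they come as an infinite family), so the strict inequality $5.5n-72 > 5n-8$ is actually attained.

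I expect no real obstacle here: the corollary is an immediate consequence of the two density bounds, and the argument is a one-line comparison of linear functions of $n$. The only thing to be careful about is citing the correct constants — in particular that the lower bound for $1$-bend RAC graphs ($5.5n-\mathcal{O}(1)$) strictly beats the upper bound for \rac{1} graphs ($5n-\mathcal{O}(1)$) in the leading coefficient, which is what makes the separation robust rather than dependent on additive constants.
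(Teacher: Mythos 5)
Your proposal is correct and matches the paper's argument exactly: the paper also obtains the corollary by combining the $5n-8$ upper bound of \cref{thm:ap-1} with the existence of $1$-bend RAC graphs having $5.5n-72$ edges from~\cite{DBLP:journals/tcs/AngeliniBFK20}, together with the trivial containment of \rac{1} graphs in $1$-bend RAC graphs.
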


\section{2-bend apRAC graphs}\label{sec:aprac2}
In \cref{thm:ap-2}, we provide an upper-bound for the edge density of \rac{2} graphs together with a lower-bound construction which is tight up to an additive constant. Our result provides a stark contrast to the one for $2$-bend RAC graphs, where the current best upper-bound on the number of edges of $n$-vertex graphs is $74.2n$~\cite{DBLP:journals/comgeo/ArikushiFKMT12}, while the previous best lower bound-construction contained only $7.83n - \mathcal{O}(\sqrt{n})$~\cite{DBLP:journals/comgeo/ArikushiFKMT12} edges.

\begin{theorem}\label{thm:ap-2}
A \rac{2} graph with $n$ vertices has at most $10n-12$~edges. Also, there is an infinite family of graphs with $10n - 46$ edges that admit \rac{2} drawings.
\end{theorem}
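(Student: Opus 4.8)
The plan is to mimic the counting scheme used for \cref{thm:ap-1}, but with the extra freedom of a second bend. Consider a \rac{2} drawing $\Gamma$ of an $n$-vertex graph $G$. Classify each edge by the orientation of its two \emph{end segments} (the first and last segment of the polyline): each is horizontal (h), vertical (v), or oblique (o); any middle segment is irrelevant for the degree-counting. Let $E_{xy}$ be the edges whose end segments are of types $x,y$. As before, a crossing can only occur between an h-segment and a v-segment, so the subgraph induced by the edges using at most one orthogonal end segment together with all oblique ones — that is, $E_{oo}\cup E_{ho}\cup E_{vo}$ — need not be planar, since their \emph{middle} segments can still be axis-parallel and cross. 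So the partition alone is not enough; instead I would bound the orthogonal end segments directly. Each vertex has exactly four orthogonal ports $\{N,E,S,W\}$, so the total number of orthogonal end segments is at most $4n$, giving $2|E_{hv}| + |E_{ho}| + |E_{vo}| \le 4n$. Separately, the edges that use no orthogonal end segment at either endpoint, $E_{oo}$, form a graph that is planar near its endpoints but may still cross via axis-parallel middle segments, so I cannot call it planar outright; here the right move is to charge such crossings through the two bends. The cleaner accounting: every crossing consumes one horizontal and one vertical \emph{segment}, and every segment other than the two end segments of an edge is a middle segment, of which a 2-bend edge has at most one. Combining the port bound $2|E_{hv}| + |E_{ho}| + |E_{vo}| \le 4n$ with a planarity-type bound of $3n-6$ on a suitably chosen ``oblique skeleton'' and adding the budget of one axis-parallel middle segment per edge should collapse to $10n-12$; pinning down exactly which subgraph is planar and where the additive $-12$ (rather than $-6$ or $-O(1)$) comes from — presumably two applications of a $3n-6$-type bound plus a correction for the outer/extremal vertices as in \cref{thm:ap-1} — is the delicate bookkeeping step.

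\textbf{Lower bound.} For the construction with $10n-46$ edges I would follow the template of \cref{fig:aprac1-full} but push it one bend further. The idea is to build a dense ``core'' — say a grid-like graph drawn with axis-parallel segments whose quadrilateral faces each carry a crossing pair — and then surround it by a bounded number of extremal vertices ($N,S,E,W$, and possibly a second layer) each joined to a linear number of core vertices, routing these connection edges with two bends so that their crossing segments are axis-parallel while their incident segments at the extremal vertices are oblique (to avoid the port bottleneck that capped the 1-bend construction at $5n$). With two bends one can afford many more such ``spokes'': roughly, each of a constant number of extremal vertices contributes $\approx 2n$ edges, and the core itself contributes $\approx 2n$, summing to $10n$ minus a fixed correction from boundary effects, yielding $10n-46$. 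The explicit routing — showing that all these 2-bend spokes can be drawn with pairwise crossings only between axis-parallel segments and without any forbidden configuration — is the part that needs a careful figure rather than an argument, and is where I would expect most of the work to lie.

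\textbf{Main obstacle.} The upper-bound direction is the crux: unlike the 1-bend case, where ``oblique or one-orthogonal-segment'' edges automatically induce a planar subgraph, here the extra middle segment lets edges that look planar cross each other, so the naive partition $E_{hv}\cup E_{ho}\cup E_{vo}\cup E_{oo}$ no longer separates the planar part from the crossing part. The fix is to argue about segments rather than edges — count orthogonal ports at vertices ($\le 4n$) and bound the axis-parallel middle segments through a charging scheme — but getting this to land on exactly $10n-12$, matching the lower bound up to the additive constant, will require the same kind of extremal-vertex refinement (topmost/bottommost vertex can shed a vertical segment) used at the end of the proof of \cref{thm:ap-1}, iterated on all four sides.
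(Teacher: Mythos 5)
Your upper-bound sketch stops exactly at the point where the actual argument happens, and the step you defer (``pinning down exactly which subgraph is planar'') is the one missing idea. The paper's proof does not classify edges by their \emph{end} segments at all: it lets $S$ be the set of edges having at least one horizontal or vertical segment \emph{incident to a vertex} (so $|S|\le 4n$, four orthogonal ports per vertex, which is essentially your port count), and then partitions the remaining edges $E\setminus S$ --- whose vertex-incident segments are all oblique --- according to the orientation of their \emph{middle} segment into $E_h$, $E_v$, $E_o$ (edges with fewer than three segments go to $E_o$). The key observations are that $E_o$-edges are entirely oblique and hence crossing-free, no two edges of $E_h$ cross each other (two horizontal segments cannot cross at a right angle), and likewise for $E_v$; therefore $E_h\cup E_o$ and $E_v\cup E_o$ each induce a planar graph with at most $3n-6$ edges, giving $|E|\le 4n+(3n-6)+(3n-6)=10n-12$. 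No ``charging through bends'' and no extremal-vertex refinement (topmost/bottommost, as in \cref{thm:ap-1}) is needed or used; the additive $-12$ is simply two applications of $3n-6$. Your proposal correctly diagnoses why the naive end-segment partition fails, but without the middle-segment partition and the two planarity claims the bound does not follow, so as written the upper bound is not proved.

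The lower bound is also only gestured at, and its accounting cannot work as stated: a single extremal vertex of a simple graph has at most $n-1$ neighbors, so ``each of a constant number of extremal vertices contributes $\approx 2n$ edges'' is impossible, and a split of roughly $8n$ pole edges plus $2n$ core edges would in any case be incompatible with the decomposition above (edges oblique at both endpoints are capped at $6n-12$, port-using edges at $4n$). The paper's construction draws its density mainly from the core: take a $k\times k$ grid of points, rotated slightly so that all $x$- and all $y$-coordinates are distinct, and add (i) about $2k^2$ two-bend edges with oblique ends between points that are consecutive in the $x$- or $y$-order (vertical resp.\ horizontal middle segments), (ii) $4k^2$ edges joining every point to four poles $N,E,S,W$, again with oblique segments at both ends, (iii) about $4k^2$ edges that use the four orthogonal ports of each point to reach nearby points (the point at $x$-distance $1$ on one side and at $x$-distance $2$ on the other, to avoid parallel edges, and symmetrically in $y$), plus a $K_4$ on the poles; with $n=k^2+4$ this totals $10n-46$, saturating all three parts of the upper bound. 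Your sketch contains none of the ingredients that make this verifiable (the rotation guaranteeing distinct coordinates, the distance condition preventing parallel edges, the explicit routing keeping all crossings between horizontal and vertical segments), so both halves of the theorem remain open in your write-up.
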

\begin{proof}
Consider a \rac{2} drawing $\Gamma$ of an $n$-vertex graph $G$. 
Each edge segment in $\Gamma$ is either horizontal (h), vertical (v) or oblique (o).
Denote by $S$ the set of edges that contain at least one segment in $\{h,v\}$ incident to a vertex. Since any vertex is incident to at most two vertical and at most two horizontal segments, it follows that $|S| \leq 4n$. 
Let $E_h$, $E_v$ and $E_o$ be the set of edges of $E \setminus S$ whose middle part is $h$, $v$ and $o$, respectively. Assuming that an edge of $E \setminus S$ consisting of less than three segments belongs to $E_o$, it follows that $E_h$, $E_v$ and $E_o$ form a partition of $E \setminus S$.
Observe that the edges of $E_o$ cannot be involved in any crossing in $\Gamma$, as all of its segments are oblique. Further, no two edges of $E_h$ or of $E_v$ can cross.
Hence, the subgraphs induced by $E_h \cup E_o$ and $E_v \cup E_o$ are planar and contain at most $3n-6$ edges each.~Recall~that $|S| \leq 4n$ and thus $|E| \leq |S| +|E_h| + |E_v| + 2|E_o| \leq 4n + 3n-6 +3n-6 = 10n-12$.

Refer to \cref{fig:10n-construction} for a schematization of the upcoming lower-bound construction and to \cref{fig:10n-full-construction} for a concrete example.
Fix an integer $k \geq 6$ and consider a set $P$ of $k^2$ points of a $k \times k$ square grid in the plane but rotated very slightly, say counterclockwise, so that the points in each column have consecutive $x$-coordinates (consequently the points in each row have consecutive $y$-coordinates).
For two points $p,q \in P$ let their \emph{$x$-distance} $\dist_x(p,q)$ be the number of points in $P$ having their $x$-coordinate between $p$ and $q$.
Similarly define the $y$-distance $\dist_y(p,q)$.
The crucial property of point set $P$ is the~following.
\begin{equation}
    \text{For any $p \neq q \in P$ we have $\dist_x(p,q) + \dist_y(p,q) \geq k-1 \geq 5$.}\label{eq:S-property}
\end{equation}
 
\begin{figure}[t]
    \begin{subfigure}[b]{.48\textwidth}
        \centering
        \includegraphics[page=4,scale=0.8]{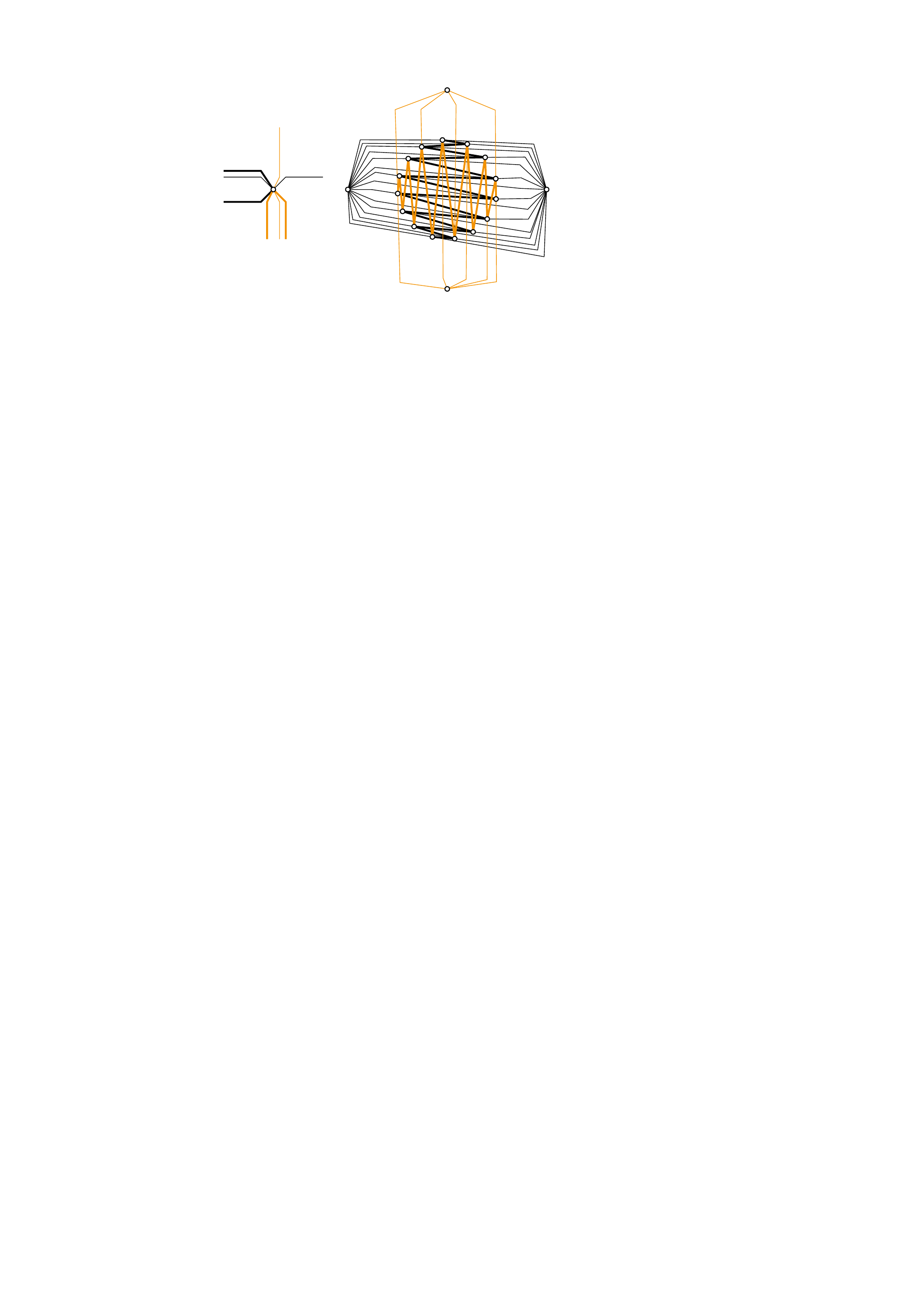}
        \subcaption{}
        \label{fig:10n-construction}
    \end{subfigure}
    \begin{subfigure}[b]{.48\textwidth}
        \centering
        \includegraphics[scale=0.6,page=6]{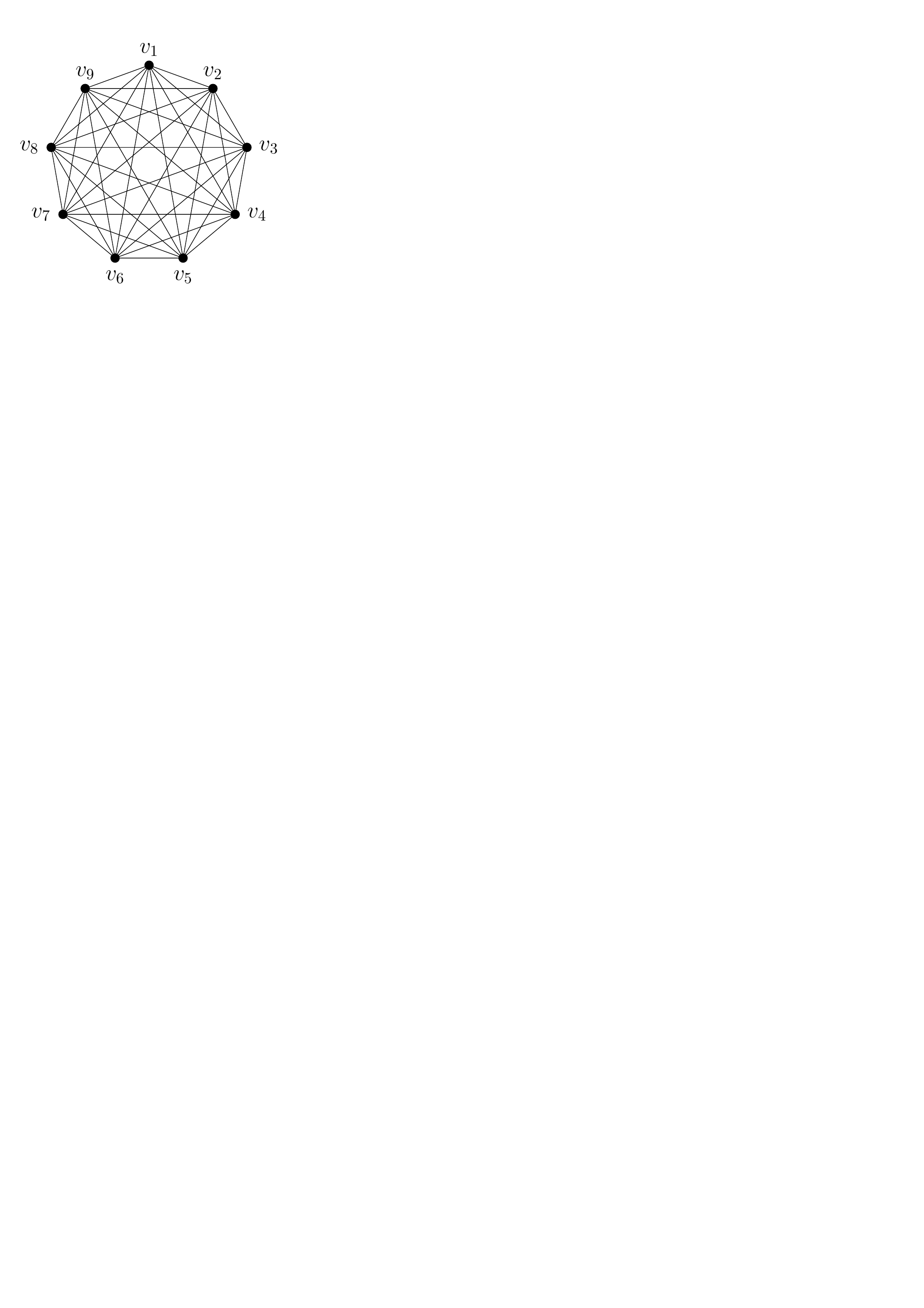}
        \subcaption{}
        \label{fig:the-box}
    \end{subfigure}    
    \caption{(a)~Illustration of the construction in \cref{thm:ap-2} with $k=6$. 
    Edges with two oblique segments are indicated in orange for vertical middle segments and in blue for horizontal middle segments.
    Edges using the horizontal or vertical ports are omitted for readability.
    (b)~Edge routing in the $8 \times 8$ box $B(u)$ of a vertex $u$. Blue ports are exclusively used by edges of $F_1$ and $F_3$ and orange ports by $F_2$ and $F_4$. Note that the ports illustrated by bold lines are reserved for oblique-$2$ edges. Bends on the border of the box are emphasized by a cross.
    }
\end{figure}

Between any pair $p,q \in P$ with consecutive $x$-coordinates, i.e., $\dist_x(p,q) = 0$, we add a $2$-bend edge with vertical middle segment by starting and ending with a very short oblique segment at $p$ respectively $q$.
Similarly, we add a $2$-bend edge with horizontal middle segment when $\dist_y(p,q) = 0$.
Note that these are in total $2k^2-2$ edges, no two of which connect the same pair of points, due to \eqref{eq:S-property}. 

Next we add four additional points $N,E,S,W$ to the top, right, bottom, and left of all points in $P$, respectively.
For every point $p$ we add a $2$-bend edge with vertical middle segment between $p$ and $N$ starting with a very short oblique segment at $p$ and ending with an almost horizontal (but still oblique) segment at $N$.
Similarly, we add a $2$-bend edge with vertical middle segment between $p$ and $S$, as well as one with horizontal middle segment to each of $E,W$.
Note that these are in total $4k^2$ edges, and that all oblique segments can be chosen such that all crossings involve middle segments only.

Next we add for (almost) each point $p \in P$ four more $2$-bend edges.
First, consider for $p$ the point $q \in P$ to the right of $p$ with $\dist_x(p,q) = 1$, unless $p$ is one of the two rightmost points in $P$.

We draw a $2$-bend edge from $p$ to $q$ by starting with a horizontal segment at $p$ to almost the $x$-coordinate of $q$, continuing with a vertical segment to almost the $y$-coordinate of $q$, and ending with a very short oblique segment at $q$.
Similarly, we use the left horizontal port at $p$ for an edge to the point $q$ left of $p$ with $\dist_x(p,q)=2$.
(We take $x$-distance $2$ instead of $1$ to avoid introducing a parallel edge.)
Symmetrically, we draw two edges using the vertical ports at $p$.
Note that these are in total $4k^2 - 10$ edges, and that all crossings involve horizontal and vertical segments only.

Finally, we add easily add six edges to create a $K_4$ on vertices $N,E,S,W$.
To conclude, we have constructed a \rac{2} graph with $n = k^2+4$ vertices and $
 (2k^2 - 2) + 4k^2 + (4k^2 - 10) + 6 = 10k^2 - 6 = 10n - 46$ edges.\qedhere

\begin{figure}[t!]
    \centering
    \includegraphics[width=.97\textwidth,page=3]{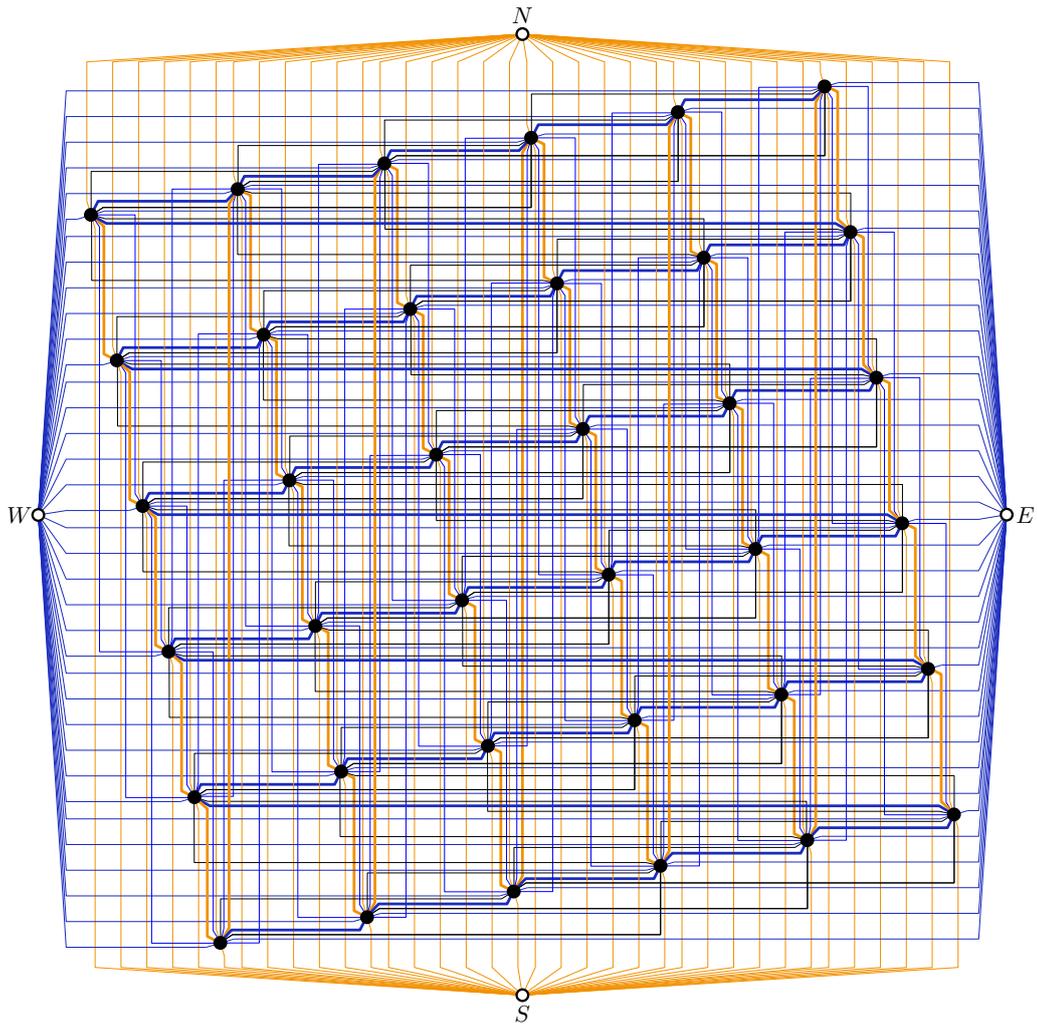}
    \caption{Illustration of the construction in \cref{thm:ap-2} with $k=6$. The $K_4$ on the vertices $N,E,S,W$ is omitted due to space reasons.}
    \label{fig:10n-full-construction}
\end{figure}    
\end{proof}

\section{Every graph with maximum degree 8 is 2-bend apRAC} \label{sec:degree-8}
In the following, we prove that graphs with maximum degree $8$ admit \rac{2} drawings of quadratic area which can be computed in linear time.
We leverage the following result in order to decompose the input graph.

\begin{lemma}
[Eades, Symvonis, Whitesides~\cite{DBLP:journals/dam/EadesSW00}] \label{thm:2-factors} Let $G=(V,E)$ be an n-vertex undirected graph of degree $\Delta$ and let $d = \lceil \Delta / 2 \rceil$. Then, there exists a directed multigraph $G' = (V,E')$ with the following properties:
\begin{enumerate}
    \item each vertex of $G'$ has indegree $d$ and outdegree $d$;
    \item $G$ is a subgraph of the underlying undirected graph of $G'$; and
    \item the edges of $G'$ can be partitioned into $d$ edge-disjoint directed $2$-factors (where a $2$-factor is a spanning subgraph of $G'$ consisting of vertex disjoint cycles, called cycle cover in~\cite{DBLP:journals/dam/EadesSW00}).
\end{enumerate}

\noindent The directed graph $G'$ and the $d$ $2$-factors can be computed in $\mathcal{O}(\Delta^2 n)$ time.
\end{lemma}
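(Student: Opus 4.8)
The plan is to route everything through an Eulerian orientation followed by an application of König's edge-colouring theorem to a bipartite multigraph. The three properties then fall out almost for free, and the running time is dominated by the bipartite decomposition.

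\emph{Step 1: augment $G$ to a $2d$-regular multigraph on the same vertex set.} Since $2d \ge \Delta \ge \deg_G(v)$ for every $v$, each \emph{deficiency} $c_v := 2d - \deg_G(v)$ is a nonnegative integer, and $\sum_{v \in V} c_v = 2dn - 2|E|$ is even. Hence one can add to $G$ a multigraph on $V$ (using parallel edges, and self-loops only in degenerate cases) in which vertex $v$ has degree exactly $c_v$: repeatedly pick two vertices of positive residual deficiency and join them by a new edge; if a single vertex of (necessarily even) positive residual deficiency remains, close it up with loops. The resulting multigraph $H$ is $2d$-regular, contains $G$ as a subgraph, and is built in $\mathcal{O}(dn) = \mathcal{O}(\Delta n)$ time. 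Orienting $H$ in Step 2 will keep it as the underlying undirected graph of $G'$, which gives property~2.

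\emph{Step 2: orient $H$ to get $G'$.} Because $H$ is $2d$-regular, every vertex has even degree, so each connected component of $H$ is Eulerian. Traversing an Eulerian circuit of each component and orienting every edge in the direction of traversal produces a directed multigraph $G'$ in which each vertex is entered and left exactly $d$ times, i.e.\ has indegree $d$ and outdegree $d$ (a loop being oriented arbitrarily, contributing one to each). This is property~1, and the Eulerian circuits cost $\mathcal{O}(|E(H)|) = \mathcal{O}(\Delta n)$ time.

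\emph{Step 3: decompose $E(G')$ into $d$ directed $2$-factors.} Form the bipartite \emph{vertex-split} multigraph $B$ with parts $V^{+} = \{v^{+} : v \in V\}$ and $V^{-} = \{v^{-} : v \in V\}$, adding an edge $u^{+}v^{-}$ for each directed edge $(u,v)$ of $G'$. Since $G'$ has indegree and outdegree $d$ everywhere, $B$ is a $d$-regular bipartite multigraph, so by König's edge-colouring theorem (equivalently, repeated use of Hall's theorem) its edges split into $d$ perfect matchings $M_1,\dots,M_d$. Translating $M_i$ back to $G'$, every vertex gets exactly one incoming and one outgoing edge, so $M_i$ is a spanning disjoint union of directed cycles, i.e.\ a directed $2$-factor; these partition $E(G')$. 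For the time bound one uses that a perfect matching of a $d$-regular bipartite multigraph with $dn$ edges can be extracted in $\mathcal{O}(dn)$ time (perfect matchings of regular bipartite multigraphs are ``easy''; alternatively, recursively Euler-split $B$), and repeating this $d$ times costs $\mathcal{O}(d^2 n) = \mathcal{O}(\Delta^2 n)$, dominating the earlier steps.

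The conceptual core (Eulerian orientation plus König) is entirely standard, so I expect the only genuine points of care to be (a) the augmentation step, where the parity of $\sum_v c_v$ and the handling of the last leftover deficiency (possibly via loops) must be checked so that the vertex set stays exactly $V$, and (b) matching the claimed $\mathcal{O}(\Delta^2 n)$ running time for the bipartite decomposition rather than the $\omega(\Delta^2 n)$ one would get from an off-the-shelf matching routine — which is why I would invoke a linear-time perfect-matching extraction for regular bipartite multigraphs (or the Euler-splitting construction) rather than a generic one.
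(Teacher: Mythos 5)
Your argument is correct, and it is essentially the standard proof of this result: the paper itself states the lemma as imported from Eades, Symvonis and Whitesides without reproving it, and their construction is exactly your route — pad $G$ with multi-edges (and possibly loops) to a $2d$-regular multigraph, orient along Eulerian circuits to equalize in- and outdegrees, and split the resulting $d$-regular bipartite vertex-split graph into $d$ perfect matchings, which pull back to directed $2$-factors. The only soft spot is the $\mathcal{O}(\Delta^2 n)$ bound, which your sketch correctly identifies as hinging on extracting each matching in time linear in the remaining edges (Euler-splitting / regular-bipartite matching techniques), rather than on an off-the-shelf matching algorithm; since the paper simply inherits this bound from the cited reference, that caveat does not affect the correctness of your proposal.
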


\noindent Now, we are ready to state the main result.

\begin{theorem} \label{thm:deg-8}
Given a graph $G$ with maximum degree $8$ and $n$ vertices, it is possible to compute in $\mathcal{O}(n)$ time a \rac{2} drawing of $G$ with $\mathcal{O}(n^2)$ area.
\end{theorem}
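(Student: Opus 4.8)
The plan is to combine the $2$-factor decomposition of \cref{thm:2-factors} with a fixed grid layout in which each vertex owns a small private box and each $2$-factor is drawn as a set of orthogonal cycles routed through those boxes. Since $\Delta=8$ gives $d=\lceil 8/2\rceil = 4$, \cref{thm:2-factors} yields a directed supergraph $G'$ whose edge set partitions into four edge-disjoint directed $2$-factors $F_1,F_2,F_3,F_4$, each a disjoint union of directed cycles covering all vertices, computable in $\mathcal{O}(n)$ time since $\Delta$ is constant. It then suffices to draw $G'$ (which contains $G$) as a \rac{2} drawing; throwing away the edges of $G'\setminus G$ at the end only removes edges.

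First I would place the vertices: assign to each vertex $u$ an $8\times 8$ axis-parallel box $B(u)$, with the boxes laid out along the main diagonal of a large grid (box of $u$ to the upper-right of box of the next vertex, say), so that all boxes are pairwise ``separated'' both horizontally and vertically — this is the role of the box $B(u)$ depicted in \cref{fig:the-box}. Each box exposes a fixed set of ports on its boundary; the legend of \cref{fig:the-box} reserves blue ports for $F_1,F_3$ and orange ports for $F_2,F_4$, and the bold ports for the ``oblique-$2$'' edges (edges drawn with two oblique segments and one axis-parallel middle segment, which therefore never participate in a crossing). For each $2$-factor $F_i$, each vertex has exactly one incoming and one outgoing $F_i$-edge, so it needs exactly one entry port and one exit port for $F_i$ inside its box; the $8\times 8$ size is exactly what is needed to give the four factors disjoint ports and enough internal routing room (the internal routing inside the box is orthogonal and crossing-free by the port assignment shown in the figure).

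Next I would route the cycles. Think of the four $2$-factors as being routed in four ``layers'': $F_1$ and $F_3$ use horizontal middle segments for the long inter-box connections while $F_2$ and $F_4$ use vertical middle segments (or a suitable such assignment). A directed cycle $C$ of $F_i$ visits a cyclic sequence of boxes; I route each edge $(u,v)$ of $C$ as a $2$-bend polyline that leaves $B(u)$ through its designated $F_i$-exit port with a short oblique stub, runs as one long axis-parallel segment in the reserved track for that cycle, and enters $B(v)$ through its $F_i$-entry port with a short oblique stub — i.e.\ each long connector is an ``oblique--axis-parallel--oblique'' path and hence crossing-free, \emph{except} that within each box the routing may need one genuine axis-parallel middle segment. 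The key point, exactly as in the proof of \cref{thm:ap-2}, is that any crossing that does occur is between a horizontal segment and a vertical segment, because the two ``horizontal'' factors only ever cross the two ``vertical'' factors, giving right-angle axis-parallel crossings. Assigning each cycle its own horizontal (resp.\ vertical) track — there are at most $n$ cycles in total across all factors — uses $\mathcal{O}(n)$ tracks in each direction, so the grid has side length $\mathcal{O}(n)$ and hence area $\mathcal{O}(n^2)$. All assignments (boxes, ports, tracks) are done by a single pass over the cycles, so the total running time is $\mathcal{O}(n)$ on top of the $\mathcal{O}(n)$ cost of \cref{thm:2-factors}.

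The main obstacle I expect is the local routing inside the boxes: one has to verify that with only $8\times 8$ grid points per box it is always possible to connect the four prescribed entry ports to the four prescribed exit ports (respecting which ports are blue/orange and which are reserved for oblique-$2$ edges) by orthogonal paths that (a) are mutually crossing-free inside the box, (b) leave/enter through short oblique stubs so the long connectors never cross anything but perpendicular connectors, and (c) do not force any two edges of $G'$ incident to $u$ to overlap or to create a forbidden parallel edge. This is a finite case analysis — essentially checking the gadget of \cref{fig:the-box} — but it is where the constant $8$ (and the need for the Eades--Symvonis--Whitesides regularization rather than $G$ itself) is actually used, and it is the part that must be done carefully; once the box gadget is verified, the global argument that crossings are axis-parallel right angles and that the area is $\mathcal{O}(n^2)$ is routine.
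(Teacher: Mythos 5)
Your overall skeleton (apply \cref{thm:2-factors} with $d=4$, give every vertex a private $8\times 8$ box, make all crossings happen between axis-parallel segments of ``horizontal'' factors $F_1,F_3$ and ``vertical'' factors $F_2,F_4$) matches the paper's intent, but your global routing scheme has a genuine geometric gap. You place all boxes along the main diagonal in a single arbitrary order and then route every inter-box edge as oblique--axis-parallel--oblique, with both oblique stubs confined to the endpoints' boxes and one long axis-parallel segment in a ``track reserved for the cycle''. With only two bends an edge has exactly three segments, so if both end segments are short stubs inside $B(u)$ and $B(v)$, the single middle segment must run from the boundary of $B(u)$ to the boundary of $B(v)$; if it is horizontal (resp.\ vertical), the two boxes must overlap in their $y$-range (resp.\ $x$-range). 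Under a diagonal placement this is only true for consecutive boxes, so almost no edge of a cycle can be drawn this way, and routing through a separate ``track'' per cycle cannot be reached from the boxes without extra bends. The clause ``within each box the routing may need one genuine axis-parallel middle segment'' already concedes a fourth segment, i.e.\ a third bend. Relatedly, your port count is off: with degree $8$ and only four orthogonal ports per vertex you cannot give every edge an orthogonal segment at \emph{both} ends, which is exactly why the paper distinguishes oblique-$2$ edges (single axis-parallel middle segment, allowed only between vertices that are \emph{consecutive} in the relevant order) from oblique-$1$ edges (long horizontal plus long vertical segment, orthogonal port at the source only, short oblique stub landing on the boundary of the target's box).

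The paper's actual mechanism, which your proposal is missing, is the construction of two \emph{different} total orders: $\prec_x$ built from the cycle structure of $F_1$ (guided by $F_3$) and $\prec_y$ built from $F_2$ (guided by $F_4$), with vertex $u$ placed at $(8i,8j)$ for its positions $i,j$ in the two orders. These orders are engineered, via a careful insertion rule for one special vertex per cycle and an edge-reorientation trick, so that (i) edges between consecutive vertices can be oblique-$2$, (ii) every other edge of $F_1\cup F_3$ gets a horizontal $E$/$W$ port at its source with each horizontal port used at most once (Invariants I.1--I.4 and their $y$-counterparts), and (iii) at most three incoming edges per vertex land on the top/bottom of its box at distinct offsets. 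This global combinatorial work --- not the finite in-box case analysis you flag as the main obstacle --- is where the difficulty of the theorem lies, and without it (or some substitute guaranteeing a conflict-free port assignment and $2$-bend reachability for every edge) your construction does not yield a \rac{2} drawing.
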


\begin{proof}
Let $G$ be a simple graph with maximum degree $8$ and $n$ vertices. We apply Lemma~\ref{thm:2-factors} to augment $G$ to a \textbf{directed} $8$-regular multigraph having four edge-disjoint $2$-factors $F_1$, $F_2$, $F_3$ and $F_4$. Before we present our algorithm in full detail, we sketch an outline of the necessary steps. We want to stress that in the following, the direction of an edge $(u,v)$ plays an important role and hence we consider it as a directed edge with source $u$ and target $v$. 
\subsection{Outline of the algorithm}
In the first step, we will construct two total orders $\prec_x$ and $\prec_y$ of the vertices of $G$ which will determine the $x$- and $y$-coordinates of the vertices in the final drawing. In particular, if vertex $u$ of $G$ has the $i$-th position in $\prec_x$ and the $j$-th position in $\prec_y$, then $u$ will be placed at point $(8i,8j)$ in the final drawing.
We will construct these two orders independently such that $\prec_x$ is defined by $F_1 \cup F_3$ and $\prec_y$ is defined by $F_2 \cup F_4$. 
After the computation of $\prec_x$ and $\prec_y$, which finalizes the position of the vertices in our resulting drawing $\Gamma$, it remains to draw the edges which are fully characterized by the placement of the respective bend-points. Every edge will be drawn with exactly three segments, which are either horizontal, vertical or oblique.
To ensure that all crossings in $\Gamma$ occur between horizontal and vertical segments, we will restrict oblique segments to be ``short'' (a precise definition follows below) and require that they are incident to a vertex. To this end, we will define, for each vertex $u$ of $G$, a closed box $B(u)$ centered at $u$ of size $8 \times 8$, such that the oblique segments incident to $u$ are fully contained inside $B(u)$. Note that by construction, the interior of two boxes do not overlap (they may touch at a corner).
Since the $x$-coordinate of two consecutive vertices $u$ and $v$ of $\prec_x$ differs by exactly $8$, there is a vertical line that is (partially) contained inside both $B(u)$ and $B(v)$ (analogous for a horizontal line and consecutive vertices in $\prec_y$). This allows us to join $u$ and $v$ by an edge that consists of two oblique segments, which is called an \emph{oblique-$2$} edge. If the unique orthogonal segment of an oblique-$2$ edge is vertical (horizontal), we will refer to it as a vertical (horizontal) oblique-$2$ edge. An edge that contains exactly one oblique segment will analogously be called an \emph{oblique-$1$} edge.

In the second step, we will classify every edge of $G$ as either an oblique-$1$ or an oblique-$2$ edge - again this classification is done independently for $F_1 \cup F_3$ and $F_2 \cup F_4$; we focus on the description of $F_1 \cup F_3$, the other one is symmetric. Let $e = (u,v)$ be an edge of $F_1 \cup F_3$. If $u$ and $v$ are consecutive in $\prec_x$, then $e$ is classified as a vertical oblique-$2$ edge. Otherwise, $e$ is classified as an oblique-$1$ edge such that the (unique) oblique segment is incident to the target $v$, while the orthogonal segment at $u$ uses the $E$-port at $u$ if $u \prec_x v$, otherwise it uses the $W$-port. 

In the final step, we will specify the exact coordinates of the bend-points. At a high level, oblique segments (which are by construction all incident to vertices) will end at the boundary of the corresponding box, see \cref{fig:the-box}.
The bend-points between vertical and horizontal segments are then naturally defined by the intersections of their corresponding lines. 

The final drawing $\Gamma$ will then satisfy the following two properties.
\begin{enumerate*}[label={(\roman*)}, ref=(\roman*)]
\item\label{prp:no-edge-overlap}No bend-point of an edge lies on another edge and
\item\label{prp:in-the-box}the edges are drawn with two bends each so that only the edge segments that are incident to $u$ are contained in the interior of $B(u)$, while all the other edge segments are either vertical or horizontal. 
\end{enumerate*}
This will guarantee that the resulting drawing is $2$-bend RAC; for an example see Fig.~\ref{fig:k9}. Note that \ref{prp:no-edge-overlap} guarantees that no two segments have a non-degenerate overlap.

\subsection{Computing $\prec_x$ and $\prec_y$}
\label{subsec:compute_orders}
We will now describe how to construct $\prec_x$ and $\prec_y$ explicitly.
We focus on the construction of $\prec_x$ which is based on $F_1$ and $F_3$, the order $\prec_y$ can be constructed analogously.
Let $C_1,C_2,\dots,C_k$ be an arbitrary ordering of the components of $F_1$. Recall that by definition, each such $C_i$ is a directed cycle. Let $S$ be a set of vertices that contains exactly one arbitrary vertex from each cycle in $F_1$ and let $P_1,P_2,\dots,P_k$ be the resulting directed paths obtained by restricting the cycles to $V \setminus S$. Note that this may yield paths that are empty, i.e., when the corresponding cycle consists of a single vertex. 
We construct $\prec_x$ (limited to $V \setminus S$) such that the vertices of each path appear consecutively defined by the unique directed walk from one endpoint of the path to the other. 
 The relative order between paths is $P_1 \prec_x P_2 \prec_x \dots \prec_x P_k$.
Hence it remains to insert the vertices of $S$ into $\prec_x$. Throughout the algorithm, we will maintain the following invariant which will ensure the correctness of our approach.
\begin{enumerate}[label={\bf I.\arabic*}, ref={\arabic*}]
    \item \label{inv:next-to} Let $u \in S$ be a vertex of cycle $C_i$. If $|C_i| > 1$, then $u$ is placed next to at least one vertex of $P_i$. Otherwise, $u$ is placed directly after the last vertex of $C_{i-1}$ (or as first vertex if $i = 1$) in $\prec_x$.
\end{enumerate}

\noindent If \four is maintained, we can guarantee the following observation.

\begin{obs}\label{obs:comparable}
Let $u \in C_i$ and $v \in C_j$ be two vertices of $G$ with $i \neq j$. Then, the relative order of $u$ and $v$ in $\prec_x$ is known.
\end{obs}

Assume that each vertex in $S$ that belongs to $C_1,\dots,C_{i-1}$ has been inserted in $\prec_x$.
Let $u \in S$ be the vertex that belongs to $C_i \setminus P_i$. If $|C_i| \leq 2$, then we place $u$ immediately after the last vertex of $C_{i-1}$ in $\prec_x$ if $i > 1$, otherwise $u$ is the first vertex of $\prec_x$ which maintains \four.
Hence, in the remainder we can assume that $C_i$ consists of at least three vertices. Let $a$, $b$ and $c$ be the vertices of $G$ such that $ (u,a)$, $(b, u) \in F_1$ and $(u,c) \in F_3$. Even though $G$ is a multigraph, we have that $a \neq b$ since $C_i$ contains at least three vertices. 
Hence, by construction we have $a \prec_x b$ - in particular, $a$ is the first vertex of $P_i$ in $\prec_x$, while $b$ is the last one. 
Let $C_j$ (possibly $j =i$) be the cycle that contains $c$. Note that it is possible that $c \in S$, i.e., $c$ is not part of $\prec_x$ initially. However, as this can only happen if $i \neq j$, we know the relative position of $u$ and $c$ by Observation~\ref{obs:comparable}.
We distinguish between the following cases based on the relative order of cycle $C_i$ (which contains $u$) and cycle $C_j$ (which contains $c$) in $\prec_x$.
\begin{enumerate}
\label{deg-8-v-3}
    \item $j < i$. 
    \label{deg-8-case-1}
     We insert $u$ immediately before $a$ in $\prec_x$ such that it is the first vertex of $C_i$, see \cref{fig:deg-8-precx-1}. Clearly, this maintains \four. 
     
    \item $i < j$.
     \label{deg-8-case-2}
     This case is symmetric to the previous one - we insert $u$ immediately after $b$ in $\prec_x$ such that it is the last vertex of $C_i$, see \cref{fig:deg-8-precx-2}, which again maintains \four.  

    \begin{figure}[t]
\centering
    \includegraphics[scale=0.6,page=7]{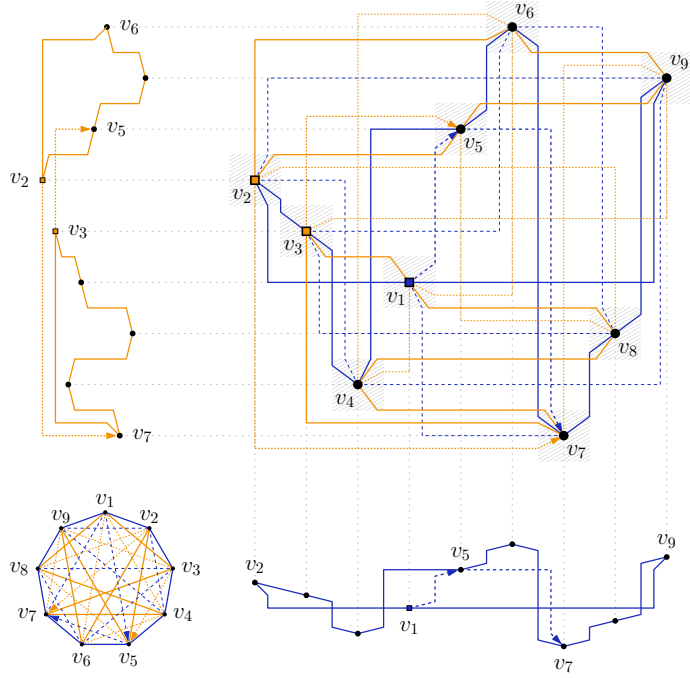}
    \caption{A \rac{2} drawing of $K_9$; 
    $F_1$ and $F_3$ are blue; 
    $F_2$ and $F_4$ are orange. 
    Below the drawing of $K_9$ there is a illustration of the cycles in $F_1$ and the relevant edges in $F_3$ for positioning $v_1 \in S$ according to Case~\ref{deg-8-case-3} in the construction of $\prec_x$. Similarly, a visualization of the cycles in $F_2$ and the relevant edges in $F_4$ is displayed to the left.
    }
    \label{fig:k9}
\end{figure}
    \item $i = j$.
    \label{deg-8-case-3}
    In this case, we have that $c$ also belongs to $C_i$ (in particular, $c$ belongs to $P_i$ and thus is already part of $\prec_x$). 
    If $c = a$ or $c = b$, we simply omit the edge $(u,c)$ and proceed as in the first case, i.e., we place $u$ as the first vertex of $C_i$.
    Otherwise, we insert $u$ directly before or directly after $c$ in $\prec_x$ based on the edge $(c, d) \in F_3$. The relative order of $c$ and $d$ in $\prec_x$ is known by Observation~\ref{obs:comparable} unless $d \in C_i$. If $d \in P_i$, the relative order between $c$ and $d$ is also known (as both are already present in $\prec_x$). If $d \notin P_i$, then $d = u$ and we can omit the edge $(u,c) \in F_3$ (because it is a copy of $(c,d) \in F_3$), in which case we can again proceed as in the first case. Hence, $d \neq u$ holds.
    If $c \prec_x d$, we insert $u$ directly before $c$ in $\prec_x$, see see \cref{fig:deg-8-precx-3}, otherwise we insert $u$ directly after $c$ in $\prec_x$. In both cases, we maintain \four. 
\end{enumerate}

\begin{figure}[t]
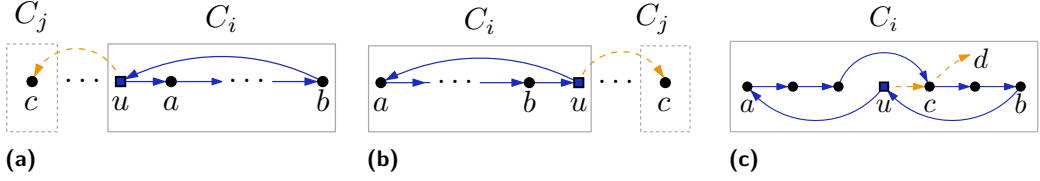

    \begin{subfigure}[b]{.31\textwidth}
    \centering
    \includegraphics[width=\textwidth,page=19]{figures/degree-8-violet.pdf}
    \subcaption{}
    \label{fig:deg-8-precx-1}
    \end{subfigure}
    \hfil
    \begin{subfigure}[b]{.31\textwidth}
    \centering
    \includegraphics[width=\textwidth,page=20]{figures/degree-8-violet.pdf}
    \subcaption{}
    \label{fig:deg-8-precx-2}
    \end{subfigure}
    \hfil
    \begin{subfigure}[b]{.29\textwidth}
    \centering
    \includegraphics[width=\textwidth,page=21]{figures/degree-8-violet.pdf}
    \subcaption{}
    \label{fig:deg-8-precx-3}
    \end{subfigure}
    \label{fig:deg-8-cases}  
    \caption{Illustration of the construction of $\prec_x$, Case~\ref{deg-8-case-1} is shown in (a), Case~\ref{deg-8-case-2} in (b) and Case~\ref{deg-8-case-3} in (c). Blue edges belong to $F_1$, while dashed orange edges belong to $F_3$.}
\end{figure}

This concludes our construction of $\prec_x$.

\subsection{Classification of the edges and port assignment}
We focus on the classification of the edges of $F_1 \cup F_3$ and their port assignment, the classification of the edges of $F_2 \cup F_4$ is analogous.
Our classification will maintain the following invariants.
\begin{enumerate}[label={\bf I.\arabic*}, ref={\arabic*}]
\setcounter{enumi}{1}
    \item \label{inv:oblique-2} The endpoints of each vertical oblique-$2$ edge are consecutive in~$\prec_x$.
    \item \label{inv:oblique-1} Each oblique-$1$ edge $(u,v) \in F_1 \cup F_3$ is assigned the $W$-port at its source vertex $u$, if $v \prec_x u$; otherwise, if $u \prec_x v$, it is assigned the $E$-port at $u$.
    \item \label{inv:unique} Every horizontal port is assigned at most once.
\end{enumerate}

Let us consider an edge $e \in F_1 \cup F_3$ between vertices $u$ and $v$. If $u$ and $v$ are consecutive in $\prec_x$, then we classify $e$ as a vertical oblique-$2$ edge. If $u$ and $v$ are not consecutive in $\prec_x$, we will classify $e$ as an oblique-$1$ edge, which therefore guarantees \three. For any oblique-$1$ edge, we will, in an initial phase, assign the ports precisely as stated in \one. In a subsequent step, we will create a unique assignment of the horizontal ports by reorienting some edges of $F_1 \cup F_3$ in order to guarantee \two.
Suppose that after the initial assignment, there exists a vertex $u$ such that one of its orthogonal ports is assigned to two oblique-$1$ edges. Assume first the $W$-port of $u$ is assigned to edges $(u,a)$ and $(u,b)$.
By construction, $u$ has exactly one outgoing edge in $F_1$, say $(u,a)$, and exactly one outgoing edge in $F_3$, say $(u,b)$.
Let $C_i$ be the cycle of $F_1$ that contains both $u$ and $a$ (which implies that $|C_i| > 1$, as we omit self-loops) and let $C_j$ be the cycle that contains $b$ (possibly $i=j$). Recall that by construction, the vertices of $P_i$ appear consecutively in $\prec_x$ before the insertion of the vertex $v \in C_i \setminus P_i$. Since $(u,a)$ is an oblique-$1$ edge, we have that $u$ and $a$ are not consecutive in $\prec_x$. If $|C_i|= 2$, one of $u$ or $a$ coincides with $v$, but then $u$ and $a$ are consecutive in $\prec_x$ and thus the edge $(u,a)$ is an oblique-$2$ edge. Hence, $|C_i| > 2$ holds and we either have $u = v$, $a = v$ or $v$ was inserted directly in between $a$ and $u$. In the following, we will refer to Cases~\ref{deg-8-case-1} - \ref{deg-8-case-3} of \cref{subsec:compute_orders}, where we computed the total order $\prec_x$. 

\begin{enumerate}
	\item $u = v$. Assume first that $C_i \neq C_j$. Then, since $(u,b)$ is assigned the $W$-port at $u$, we have $b \prec_x u$ by \one which implies $j < i$ and hence we placed $u$ according to Case~\ref{deg-8-case-1}, i.e., as the first vertex of $C_i$ in $\prec_x$. But since $a \in C_i$, we then have $u \prec_x a$ and thus $(u,a)$ would use the $E$-port at $u$, a contradiction. 
	
	Hence assume that $C_i = C_j$, i.e., $b \in C_i$. Then we are in Case~\ref{deg-8-case-3}. In particular, we placed $u$ such that $u$ and $b$ are consecutive, thus $(u,b)$ is classified as an oblique-$2$ edge, again we obtain a contradiction.
	
	\item $a = v$. Since  $(u,a)$ uses the $W$-port at $u$ by assumption, we have that $a \prec_x u$ by \one and thus $a$ cannot be the last vertex of $C_i$ in $\prec_x$, and so we are in Case~\ref{deg-8-case-1}~or~\ref{deg-8-case-3}. In Case~\ref{deg-8-case-1}, $a$ is placed as the first vertex of $C_i$ since there exists a vertex $a'$ with $a' \prec_x a$ such that $(a,a') \in F_3$. Further, $a$ is placed next to vertex $v'$ (i.e., the first vertex of $P_i$ in $\prec_x$) with $(a,v') \in F_1$ by construction. Then, we can redirect the edge $(u,a) \in F_1$ such that we can assign $(a,u)$ the $E$-port at $a$ which solves the conflict at $u$ and does not introduce a conflict at $a$ which guarantees \two.
	In Case~\ref{deg-8-case-3}, $a$ was placed consecutive to vertex $a' \in P_i$ with $(a,a') \in F_3$. As $(u,a)$ uses the $W$-port at $u$, $u$ is necessarily the last vertex of $C_i$ in $\prec_x$. Since the other neighbor of $a$ in $F_1$ different from $u$ is the first vertex of $C_i$ in $\prec_x$, i.e., it precedes $a$ in $\prec_x$, we can again reorient the edge $(u,a)$ and assign the edge $(a,u)$ to the free $E$-port of $a$, solving the conflict at $u$ which guarantees \two.
	
	\item $v$ was inserted directly in between $a$ and $u$. In this case, we have that both $a$ and $u$ belong to $P_i$. Since we assume that $(u,a)$ uses the $W$-port at $u$, it follows that $a \prec_x u$ holds. But then by construction, the edge of $F_1$ that joins $a$ und $u$ is directed from $a$ to $u$ and we obtain a contradiction.
\end{enumerate}

The case where the $E$-port of $u$ is assigned to two edges can be solved in a similar way. 
More precisely, assume  that the $E$-port of $u$ is assigned to edges $(u,a)$ and $(u,b)$ with $(u,a) \in F_1$ and $(u,b) \in F_3$.
The following case analysis might look symmetric to the one where the $E$-port of $u$ is assigned twice, however there exist some subtle differences, in particular in the second case.
Again, let $C_i$ be the cycle of $F_1$ that contains both $u$ and $a$ (which implies that $|C_i| > 1$) and let $C_j$ be the cycle that contains $b$ (possibly $i=j$). Since $u$ and $a$ are not consecutive in the final $\prec_x$, we either have $u = v$, $a = v$ or $v$ was inserted directly in between $u$ and $a$ which gives rise to the following cases.

\begin{enumerate}
	\item $u = v$. Assume first that $C_i \neq C_j$. Then, since $u \prec_x b$ we have $i < j$ and thus according to Case~\ref{deg-8-case-2}, $u$ is placed as the last vertex of $C_i$ in $\prec_x$. This would imply $a \prec_x u$, which is a contradiction to $(u,a)$ using the $E$-port at $u$. Hence assume that $C_i = C_j$, i.e., $b \in C_i$. Then we are in Case~\ref{deg-8-case-3}. In particular, we placed $u$ such that $u$ and $b$ are consecutive, thus $(u,b)$ is classified as an oblique-$2$ edge, again a contradiction.
	\item $a = v$. Since $(u,a)$ uses the $E$-port at $u$ by assumption, it follows that $u \prec_x a$. By construction, $u$ is the last vertex of $P_i$ in $\prec_x$. But then $a$ was necessarily inserted as the last vertex of $C_i$ in $\prec_x$ such that $u$ and $a$ are consecutive and hence $(u,v)$ is an oblique-$2$ edge.
	
	\item $v$ was inserted directly in between $u$ and $a$. By construction, this only occurs in Case~\ref{deg-8-case-3}.
	In this instance, $v$ was inserted in between $u$ and $a$ and thus either $(v,u) \in F_3$ or $(v,a) \in F_3$. Suppose first the former. By our construction rule, $v$ is placed after $u$ if there exists a vertex $u'$ such that $(u,u') \in F_3$ and $u' \prec_x u$. But this is impossible, as $u'$ and $b$ cannot coincide, since we have $u' \prec_x u \prec_x b$. Hence, $v$ is placed before $a$ and there exists a vertex $a'$ such that $(a,a') \in F_3$ with $a \prec_x a'$. Further, by construction, $a$ is consecutive to one of its neighbors of $F_1$ (the one different from $u$). Hence, we can
	again reorient $e$ such that $e = (a,u)$ uses the $W$-port at $a$ to guarantee \two.

\end{enumerate}

\noindent Observe that if an edge $(u,v)$ was redirected, then both $u$ and $v$ belong to the same cycle $C_i$ of $F_1$ and since this operation has to be performed at most once per cycle, it follows that they can be considered independently. So far, we have computed $\prec_x$ and classified every edge of $F_1 \cup F_3$ guaranteeing Invariants \ref{inv:next-to}-\ref{inv:unique}. Symmetrically, we can compute $\prec_y$ and classify every edge of $F_2 \cup F_4$ guaranteeing the following corresponding versions of Invariants~\ref{inv:next-to}-\ref{inv:unique}:

\begin{enumerate}[label={\bf I.\arabic*}, ref={\arabic*}]\setcounter{enumi}{4}
\item \label{inv:y-next-to} Let $\mathcal{S}$ be a set containing exactly one arbitrary vertex from each of the cycles $\mathcal{C}_1, \mathcal{C}_2,\dots, \mathcal{C}_{\kappa}$ of $F_2$ and denote by $\mathcal{P}_1,\mathcal{P}_2,\dots,\mathcal{P}_{\kappa}$ the resulting paths when restricting the cycles to $V \setminus \mathcal{S}$. Let $u \in \mathcal{S}$ be a vertex of cycle $\mathcal{C}_i$. If $|\mathcal{C}_i| > 1$, then $u$ is placed next to at least one vertex of $\mathcal{P}_i$ in $\prec_y$. Otherwise, $u$ is placed directly after the last vertex of $\mathcal{C}_{i-1}$ (or as first vertex if $i = 1$) in $\prec_y$.
\item \label{inv:y-oblique-2} The endpoints of each vertical oblique-$2$ edge are consecutive in~$\prec_y$.
\item \label{inv:y-oblique-1} Each oblique-$1$ edge $(u,v) \in F_2 \cup  F_4$ is assigned the $S$-port at its source vertex $u$, if $v \prec_y u$; otherwise, it is assigned the $N$-port.
\item \label{inv:y-unique} Every vertical port is assigned at most once.

\end{enumerate}

\subsection{Bend placement}\label{app:bends}
We begin by describing how to place the bends of the edges on each side of the box $B(u)$ of an arbitrary vertex $u$ based on the type of the edge that is incident to $u$, refer to \cref{fig:the-box}. Let $(x_u,y_u)$ be the coordinates of $u$ in $\Gamma$ that are defined by $\prec_x$ and $\prec_y$.
Recall that the box $B(u)$ has size $8 \times 8$. Let $e$ be an edge incident to $u$.
We focus on the case in which $e \in F_1 \cup F_3$, the other case in which $e$ belongs to $F_2 \cup F_4$ is handled symmetrically by simply exchanging $x$ with $y$, ``top/bottom'' with ``right/left'' and ``vertical'' with ``horizontal'' from the following description.
By definition, $e$ is either an oblique-$1$ edge or a vertical oblique-$2$ edge.
Suppose first that $e$ is an oblique-$1$ edge. If $e = (u,v)$, i.e., $e$ is an outgoing edge of $u$ in $F_1 \cup F_3$, then by Invariant~\ref{inv:oblique-1} edge $e$ uses either the $W$- or $E$-port at $u$. In the former case, the segment of $e$ incident to $u$ passes through point $(y_u,y_u-4)$, while in the latter case it passes through point $(y_u,y_u+4)$. For an example, refer to the outgoing edge $(v_3,v_6)$ of $v_3$ in Fig.~\ref{fig:k9}.
If $e = (v,u)$, i.e., $e$ is an incoming edge of $u$ in $F_1 \cup F_3$, then by Invariant~\ref{inv:oblique-1} $e$ uses a horizontal port at $v$ and by the fact that every edge consists of exactly three segments, the vertical segment of $e$ ends at the top or the bottom side of $B(u)$. Since any vertex has at most three incoming edges in $F_1 \cup F_3$ by construction,
we can place the respective bends at $x$-coordinate $x_u+i$ with $i \in \{-2,-1,1,2\}$ and $y$-coordinate $y_u+4$ ($y_u-4$) for the top (bottom) side such that the assigned $i$-value is unique, refer to the incoming edge $(v_4,v_9)$ of $v_9$ in Fig.~\ref{fig:k9}, where $i = -1$.
Finally, the other bend-point of $e$ is uniquely defined as $(x_u+i,y_v)$, since it connects a vertical with a horizontal segment by construction.

Suppose now that $e$ is a vertical oblique-$2$ edge. By \three, $u$ and $v$ are consecutive in $\prec_x$. If $v \prec_x u$ the $x$-coordinate of the bend point is $x_u-4$, otherwise it is $x_u+4$; e.g., refer to the edges $(v_2,v_3)$ and $(v_3,v_4)$ of $v_3$ in Fig.~\ref{fig:k9}, respectively. In order to define the $y$-coordinate of the bend point, we have to consider the relative position of $u$ and $v$ in $\prec_y$. If $v \prec_y u$ the $y$-coordinate of the bend point of $e$ is $y_u-3$ and otherwise it is $y_u+3$. \three implies that any vertex has at most two vertical oblique-$2$ edges since no vertex has more than two direct neighbors in $\prec_x$. From the description of the bend-points, the observation follows:
\begin{obs}\label{obs:no-obique-overlap}
Let $b$ be a bend-point that delimits an oblique segment $s$ which belongs to an edge $e$. If $s$ is incident to $u$, then $b$ does not lie on any other edge incident to $u$.
\end{obs}
\subsection{Proof of correctness}
As a first step we prove that the obtained drawing is in fact \rac{2},
that is, we have to show that Properties~\ref{prp:no-edge-overlap} and \ref{prp:in-the-box} are satisfied. To do so, we introduce the following notation: Let $e = (u,v)$ be an edge of $G$. We define $b_u$ as the first bend-point that is encountered on $e$ starting at $u$, while $b_v$ is the second.
Assume that vertex $u$ is the $i$-th vertex in $\prec_x$ and the $j$-th vertex of $\prec_y$. We define by $\mathcal{H}(u)$ the horizontal strip $[8i-3,8i+3]$ and by $\mathcal{V}(u)$ the vertical strip $[8j-3,8j+3]$; see \cref{fig:the-box}.
The construction in the previous subsection guarantees the following propositions.

\begin{proposition}\label{propo:seg-in-intervall}
Any vertical (horizontal) segment $s$ that belongs to an oblique-$1$ edge $(u,v)$ is contained in $\mathcal{V}(u)$ or $\mathcal{V}(v)$ ($\mathcal{H}(u)$ or $\mathcal{H}(v)$)
\end{proposition}
\begin{proof}
This immediately follows from the description of the bend-points.
\end{proof}

\begin{proposition}\label{propo:no-segment}
Let $u$ be a vertex. Any vertical (horizontal) segment contained in $\mathcal{V}(u)$ ($\mathcal{H}(u)$) belongs to an edge which is incident to $u$.
\end{proposition}
\begin{proof}
    We observe that by construction, no other box $B(v)$ with $v \neq u$ is contained inside $\mathcal{V}(u)$ or $\mathcal{H}(u)$.
    We proof the statement for $\mathcal{V}(u)$, the other case is symmetric. Suppose for a contradiction that the vertical segment $s$ which belongs to $e = (v,w)$ is contained inside $\mathcal{V}_u$, but $v \neq u \neq w$. Suppose first that $e$ is an oblique-$2$ edge. If $e$ is a horizontal oblique-$2$ edge, it does not contain any vertical segment. If $e$ is a vertical oblique-$2$ edge, $v$ and $w$ are consecutive in $\prec_x$ by Invariant~\ref{inv:oblique-2}. In particular, segment $s$ overlaps with the boundary of both $B(v)$ and $B(w)$ and hence cannot be contained in $\mathcal{V}(u)$ since neither $B(v)$ nor $B(w)$ are contained inside $\mathcal{V}(u)$.
    Suppose now that $e$ is an oblique-$1$ edge. Then, segment $s$ is either incident to $v$ or $w$, in which case it is (partially) contained inside $B(v)$ or $B(w)$, or it is a middle segment of $e$ and by construction ends at the boundary of either $B(v)$ or $B(w)$. Since $v \neq u \neq w$ and since no other box besides $B(u)$ is contained inside $\mathcal{V}(u)$, $s$ can not exist.
\end{proof}
In order to show that that Property~\ref{prp:no-edge-overlap} is maintained in $\Gamma$, we consider the following cases:
\begin{enumerate}
    \item $e = (u,v)$ is an oblique-$1$ edge and $b_u$ or $b_v$ lie on another edge
    
     W.l.o.g. assume that $e \in F_1 \cup F_3$ and that $u \prec_x v$. By Invariant~\ref{inv:oblique-1}, this assumption implies that $e$ uses the $E$-port at $u$ and $b_v$ is placed on the top/bottom side of $B(v)$. In particular, $b_v$ is contained inside $\mathcal{V}(v)$,
    and therefore also $b_u$ as they are connected by a vertical segment. Any vertical segment inside $\mathcal{V}(v)$ belongs to an edge that is incident to $v$ by Proposition~\ref{propo:no-segment}. But then Observation~\ref{obs:no-obique-overlap} guarantees no such overlap for $b_v$ and hence for $b_u$ occurs. Symmetrically, a horizontal segment overlapping $b_u$ is inside $\mathcal{H}(u)$  and thus belongs to an edge incident to $u$, in which case we can again use Observation~\ref{obs:no-obique-overlap} to show that no such overlap exists. 
    Finally, consider a horizontal segment that overlaps $b_v$. If this segment belongs to an oblique-$1$ edge $(u',v')$, then it is either contained inside $\mathcal{H}(u')$ or $\mathcal{H}(v')$ - but then $B(u)$ would be contained inside  $\mathcal{H}(u')$ or $\mathcal{H}(v')$, a contradiction. Hence, this segment belongs to an oblique-$2$ edge $(u',v')$, in particular to a horizontal one. The middle segment is contained in exactly two boxes $B(u')$ and $B(v')$ (since $u'$ and $v'$ are consecutive in $\prec_y$ by Invariant~\ref{inv:y-oblique-2}). Hence, no overlap can occur unless $v = u'$ or $v = v'$, but then by Observation~\ref{obs:no-obique-overlap} no overlap occurs.

    \item $e = (u,v)$ is an oblique-$2$ edge and $b_u$ or $b_v$ lies on another edge
    
    By construction, $b_u$ and $b_v$ lie on the boundary of $B(u)$ and $B(v)$, respectively. Since no two boxes overlap and by Observation~\ref{obs:no-obique-overlap}, $b_u$ and $b_v$ do not lie on other oblique segments. Using Proposition~\ref{propo:seg-in-intervall}, we can deduce that $b_u$ and $b_v$ do not lie on a vertical or horizontal segment that belongs to an oblique-$1$ edge, as otherwise $B(u)$ or $B(v)$ would lie in the vertical or horizontal strip of another vertex. Finally, assume that $B(u)$ or $B(v)$ would lie on the middle part of another oblique-$2$ edge $e' = (u',v')$. If $e$ and $e'$ are both vertical/horizontal, this would imply that there exists a pair of vertices with exactly the same $x$-coordinate ($y$-coordinate), which is impossible by construction. Hence, w.l.o.g. assume that $e$ is a vertical oblique-$2$ edge, while $e'$ is a horizontal oblique-$2$ edge and $b_u$ lies on $e'$. But since $b_u$ is placed at $y_u \pm 3$ , $e'$ would be contained in $\mathcal{H}(u)$, a contradiction to Proposition~\ref{propo:no-segment} unless $u'=u$ or $v' = u$, for which Observation~\ref{obs:no-obique-overlap} holds.

\end{enumerate}

Since we established that  Property\ref{prp:no-edge-overlap} holds for $\Gamma$, it remains to show that  Property~\ref{prp:in-the-box} is satisfied.
Consider the first part of Property~\ref{prp:in-the-box}.
Let $s$ be a segment that passes through $B(u)$, but $s$ is not incident to $u$. Since no two boxes overlap and since any oblique segment is contained inside a box, it follows that $s$ can not be oblique, hence it is either vertical or horizontal.
If $s$ belongs to an oblique-$2$ edge $e$, then $s$ is the middle part of $e$, but then by definition $s$ lies on the boundary of $B(u)$ and $B(v)$ and cannot pass through the interior of any box, as otherwise $u$ and $v$ would not be consecutive in $\prec_x$ or $\prec_y$, a contradiction to Invariant~\ref{inv:oblique-2} or Invariant~\ref{inv:y-oblique-2}. If $s$ belongs to an oblique-$1$ edge, then  $s$ is necessarily contained inside the vertical/horizontal strip of one of its endpoints by Proposition~\ref{propo:seg-in-intervall} and thus cannot be contained in the interior of $B(u)$, as otherwise $B(u)$ is contained in the vertical/horizontal strip of a different box, which establishes the first part of~\ref{prp:in-the-box}.

To conclude Property~\ref{prp:in-the-box}, we have to show that any segment that is outside the box is either vertical or horizontal, i.e., that the two end-points that delimit such a segment differ only in $x$- or in $y$-coordinate. To do so, consider any edge $e=(u,v)$.
Suppose first that $e$ is an oblique-$2$ edge.
If $e$ is a vertical oblique-$2$ edge, then $u$ and $v$ are consecutive in $\prec_x$ by \three and $B(u)$ and $B(v)$ are aligned in $x$-coordinate, in particular, there is a vertical line that contains the right side of one box and the left side of the other, hence it passes through the two assigned bend-points, which implies that the middle segment is indeed vertical.
Similarly, we argue if $e$ is a horizontal oblique-$2$ edge.
Suppose now that $e$ is an oblique-$1$ edge. 
\one and Invariant~\ref{inv:y-oblique-1} guarantee that for any relative position of $v$ to $u$, we assigned an appropriate orthogonal port at $u$ which allows to find a point on the first segment, such that the orthogonal middle segment of the edge $e$ (that is perpendicular to the first) can reach the assigned bend point on the boundary of $B(v)$ which shows Property~\ref{prp:in-the-box}.

\begin{remark}
 The resulting drawing of our algorithm is not necessarily simple, i.e., it is possible that two edges have more than one point in common (endpoint and crossing point). While this could be solved in a postprocessing step for vertices that are not part of $S$, it is not immediate how to do it for the set of special vertices. \cite{DBLP:journals/tcs/AngeliniBFK20} showed that restricting the drawing to be simple is in fact a real restriction on the graphs that are realizable in the 1-bend RAC-setting, which intuitively should also hold for the 2-bend case.
\end{remark}

Having proved that the obtained drawing is \rac{2}, we discuss the time complexity and the required area.
We apply \cref{thm:2-factors} to $G$ to obtain $F_1$, $F_2$, $F_3$ and $F_4$ in $\mathcal{O}(n)$ time. For each cycle of $F_1$ and $F_2$, an appropriate ordering of its internal vertices, the classification of the incident edges and the assignment of the orthogonal ports can be computed in time linear in the size of the cycle. Clearly, computing the bend-points can be done in linear time as well. Hence we can conclude that the drawing can be computed in $\mathcal{O}(n)$ time.
For the area, we can observe that the size of the grid defined by the boxes is $8n \times 8n$ and by construction, any vertex and any bend point is placed on a distinct point on the grid.
\end{proof}

\section{Generalization of apRAC}
The following theorem establishes that the class of  \srac{0}{s} graphs forms a proper subclass of the class of $0$-bend RAC graphs when $s \in o(n)$. 

\begin{figure}[h]
    \begin{subfigure}[b]{.32\textwidth}
    \centering
    \includegraphics[scale=1,page=1]{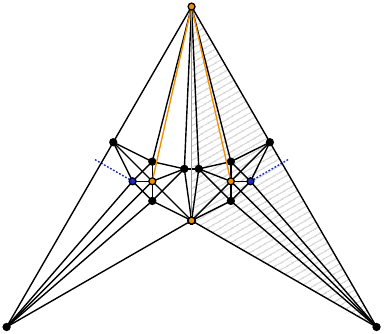}
    \subcaption{}
    \label{fig:inclusion-relation-a}
    \end{subfigure}
    \hfil
    \begin{subfigure}[b]{.65\textwidth}
    \centering
    \includegraphics[scale=1,page=2]{figures/inclusion-relation-violet.pdf}
    \subcaption{}
    \label{fig:inclusion-relation-b}
    \end{subfigure}
    \caption{Illustrations for the proof of \cref{thm:s-rac}.}
    \label{fig:inclusion-relation}   
\end{figure}

\begin{theorem}\label{thm:s-rac}
There exist $0$-bend RAC graphs on $n$ vertices which are not \srac{0}{s} for any 
 $s \in o(n)$.
\end{theorem}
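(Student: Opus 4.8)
The goal is to exhibit, for every $n$, a $0$-bend RAC graph $G_n$ on $\Theta(n)$ vertices that cannot be drawn in the \srac{0}{s} model unless $s = \Omega(n)$. The natural strategy is to take a graph that is known to have many edges in a $0$-bend RAC drawing but is ``tight'' in the sense that any RAC drawing forces many distinct crossing directions. A promising candidate is the $4n-O(1)$-edge extremal RAC construction from \cite{DBLP:journals/tcs/DidimoEL11} (or a similar dense gadget built from the augmented square antiprism graph of \cref{prop:antiprism-two-embbeddings}), suitably tiled so that it contains $\Omega(n)$ \emph{pairwise non-parallel} crossing pairs that are all simultaneously forced. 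First I would recall that in any straight-line RAC drawing two crossing edges determine a pair of perpendicular directions; in the $s$-slope model every crossing pair must use one of only $s$ (mutually perpendicular-closed) direction classes, so a drawing with more than, say, $2s$ genuinely different crossing directions is impossible. Hence it suffices to build $G_n$ so that every one of its RAC drawings is forced to realize $\Omega(n)$ distinct crossing slopes.

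The key structural ingredient is rigidity: I would chain together $\Theta(n)$ copies of a small rigid gadget (Figures~\ref{fig:inclusion-relation-a} and \ref{fig:inclusion-relation-b} presumably depict exactly this) in a ``fan'' or ``staircase'' arrangement, where consecutive gadgets share vertices and the shared-vertex constraints, together with the uniqueness of the combinatorial embedding guaranteed by \cref{prop:antiprism-two-embbeddings}, force the crossing pair inside the $i$-th gadget to be rotated relative to that of the $(i-1)$-st by a nonzero angle that cannot be a multiple of any common base angle. Concretely, I would argue by a counting/pigeonhole step: if only $s$ slope classes are available and $s \in o(n)$, then among the $\Theta(n)$ forced crossings two of them, in gadgets $i < j$, must use the same perpendicular direction pair; I would then show this collapses the geometry of the intermediate gadgets (e.g.\ it forces two vertices to coincide, or forces a forbidden fan crossing via \cref{prop:no-fan}, or contradicts \cref{prop:no-three-outside}/\cref{prop:vertex-in-triangle}), yielding a contradiction. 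The edges joining consecutive gadgets are where \cref{prop:no-fan} and \cref{prop:no-triag} do the heavy lifting, ruling out the ``degenerate'' relative placements that would otherwise let a slope be reused.

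I would organize the write-up as: (1)~define the gadget $H$ and verify it has a unique RAC embedding and a forced crossing whose direction is determined by the placement of its ``anchor'' vertices; (2)~define $G_n$ as the chain of $\Theta(n)$ copies glued along anchors, and check $G_n$ is $0$-bend RAC by exhibiting one concrete drawing (this is the easy direction and just needs the figure); (3)~prove the rigidity lemma: in any RAC drawing of $G_n$, the crossing directions of consecutive gadgets differ, and more strongly, the direction of gadget $i$ is a strictly monotone function of $i$ so all $\Theta(n)$ are distinct; (4)~conclude that $s$ must be $\Omega(n)$. The main obstacle, I expect, is step~(3): proving that the per-gadget crossing direction is \emph{strictly} forced to change — it is easy to see each gadget is individually rigid, but ruling out a global drawing in which many gadgets happen to share a slope (via large-scale ``folding'' of the chain back on itself) requires care. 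The cleanest route is probably to make the gluing so constrained that the convex position / nesting structure of the anchors is itself forced (again leaning on \cref{prop:no-triag} and \cref{prop:vertex-in-triangle} to forbid a vertex from lying inside a triangle it is adjacent to), so that the chain must be drawn ``monotonically'' and no two gadgets can realize the same direction; then the $o(n)$ assumption on $s$ fails by a direct pigeonhole count.
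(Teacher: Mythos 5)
There is a genuine gap, and you have correctly located it yourself: step~(3) of your plan, the ``rigidity lemma'' asserting that in every RAC drawing of the chained graph the per-gadget crossing direction is strictly monotone in $i$ (hence all $\Theta(n)$ directions are distinct), is precisely the hard part, and your proposal offers no actual argument for it. Gluing gadgets only along anchors shared by \emph{consecutive} copies gives no mechanism to prevent the ``folding'' you worry about: two gadgets far apart in the chain are essentially free to be placed so that their forced crossings reuse the same perpendicular direction pair, and none of \cref{prop:no-fan}, \cref{prop:no-triag}, \cref{prop:no-three-outside} or \cref{prop:vertex-in-triangle} is a tool for comparing slopes across distant, combinatorially independent parts of the drawing. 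Without that lemma the pigeonhole step has nothing to count, so the proof does not go through as proposed.

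The paper's proof needs no such global rigidity statement because it chooses the gluing differently: it takes $k$ copies of the (doubled) augmented square antiprism gadget, whose embedding is fixed by \cref{prop:antiprism-two-embbeddings}, makes \emph{all} copies share one central vertex $v$, and joins consecutive copies by an extra edge arranged cyclically to pin down the embedding of the whole graph (\cref{fig:inclusion-relation-b}). The embedding then forces $\Omega(k)$ crossed edges that are all incident to the single vertex $v$, and straight segments emanating from one common point automatically make pairwise different angles with the $x$-axis (up to the bounded multiplicity coming from opposite rays and perpendicular partners), so at most $O(s)$ of them can be parallel or perpendicular to the $s$ admissible slopes; hence $s=\Omega(k)=\Omega(n)$. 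In other words, the missing idea in your write-up is concurrency at a shared vertex: it converts the statement ``many forced crossings use many distinct directions,'' which for a chain would require a delicate global argument, into a local triviality. If you revise your construction so that every gadget's forced crossing has an edge incident to one common vertex, your pigeonhole step (4) goes through immediately and steps (1)--(2) are unchanged.
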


\begin{proof}
Let $G$ be the extension of two augmented square antiprism graphs, which has a unique combinatorial embedding by~\cite{DBLP:journals/jgaa/ArgyriouBS12}; see \cref{fig:inclusion-relation-a}. The final graph $G_k$ consists of $k$ copies of $G$, namely $G_1,\dots,G_k$ such that $G_i$ and $G_{i+1}$ (modulo $k$) are connected by an additional edge (blue in \cref{fig:inclusion-relation-b}) and all copies of $G$ share one vertex $V$ (center in \cref{fig:inclusion-relation-b}). In \cref{fig:inclusion-relation-b} a RAC drawing of $G_6$ is shown. Clearly, this drawing can be extended to any $k \mod 2 = 0$.
Since the blue edges form a horizontal extension the embedding of $G_k$ is unique. But then vertex $v$ is necessarily incident to at least $\frac{3k}{2}$ crossed edges (orange in \cref{fig:inclusion-relation-b}) whose angle formed with the $x$-axis is pairwise different.
Hence, in order to admit a \srac{0}{s} drawing, $s$ has to be at least $\frac{2k}{3}$. But clearly, $k \in \Omega(n)$, a contradiction.
\end{proof}

Regarding \srac{2}{s} graphs, we can extend \cref{thm:ap-2} to provide an upper bound on the edge-density depending on $s$. We remark here that the upper bound of $74.2n$ for general $2$-bend RAC graphs due to \cite{DBLP:journals/comgeo/ArikushiFKMT12} on $n$ vertices can be slightly improved to $\approx 71.9$ by plugging in the new leading constant of the crossing lemma due to \cite{DBLP:journals/comgeo/Ackerman19}.

\begin{theorem}\label{thm:upper-s-rac}
A \srac{2}{s} graph $G$ with $n$ vertices has at most 

$\min\{(6+4s)n-12,71.9n-\mathcal{O}(1)\}$ edges.
\end{theorem}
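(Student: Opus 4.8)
The plan is to prove the two bounds in the minimum separately and then combine them. The second bound, $71.9n - \mathcal{O}(1)$, follows from the known argument for general $2$-bend RAC graphs: a $2$-bend RAC graph has no $K_{5}$ (or more precisely bounds from the Crossing Lemma applied to the fact that each edge has at most $2$ bends, hence crossings are controlled), and plugging the improved leading constant $c \approx 1/29$ of Ackerman~\cite{DBLP:journals/comgeo/Ackerman19} into the computation of Arikushi et al.~\cite{DBLP:journals/comgeo/ArikushiFKMT12} replaces $74.2n$ by roughly $71.9n$; since a \srac{2}{s} graph is in particular a $2$-bend RAC graph, this bound applies verbatim. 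So the real work is the first bound $(6+4s)n - 12$, which should generalize the proof of \cref{thm:ap-2}.

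First I would set up the same edge partition as in \cref{thm:ap-2}, but now relative to the $s$ allowed slopes. Fix a \srac{2}{s} drawing $\Gamma$ of $G$. For each of the $s$ slope-directions, call an edge-segment a \emph{type-$\ell$ segment} if it is parallel to the $\ell$-th slope, and \emph{oblique} otherwise (i.e.\ not parallel or perpendicular to any of the $s$ reference lines, hence never involved in a crossing). As before, let $S$ be the set of edges having at least one crossing-relevant segment (parallel or perpendicular to one of the $s$ lines) incident to a vertex. Each vertex is incident to at most two segments in each of the $2s$ crossing-relevant directions ($s$ slopes and their $s$ perpendiculars), so a vertex has at most $4s$ such incident segments, giving $|S| \le 4sn$. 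The remaining edges $E \setminus S$ either are entirely oblique in their middle part — collect these in $E_o$ — or have a middle segment parallel or perpendicular to exactly one of the $s$ reference lines $\ell$; collect the latter in a class $E_\ell$ (so $E \setminus S$ is partitioned into $E_o$ and the $E_\ell$'s). For each $\ell$, no two edges of $E_\ell$ can cross each other, because two segments that are both parallel to $\ell$, or both perpendicular to $\ell$, or one of each — wait, one parallel and one perpendicular to $\ell$ \emph{can} cross at a right angle; so I need to be slightly more careful and split $E_\ell$ by whether the middle segment is parallel to $\ell$ or to its perpendicular, giving $2s$ planar-ish classes rather than $s$. Then each such class is crossing-free together with $E_o$, hence has at most $3n - 6$ edges. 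Actually, to match the target constant $6+4s$ rather than $6s+4s$, I should instead pair up a slope with its perpendicular as in the $s=1$ case (there $E_h \cup E_o$ and $E_v \cup E_o$ were each planar because an $h$-segment never crosses an $h$-segment and a $v$ never crosses a $v$ — they only cross each other): for a \emph{single} reference line $\ell$, the subgraph on (edges with middle parallel to $\ell$) $\cup$ $E_o$ is planar, and likewise for (edges with middle perpendicular to $\ell$) $\cup E_o$, so two planar subgraphs per $\ell$, total $2s$ of them, each $\le 3n-6$, double-counting $E_o$. Summing: $|E| \le |S| + \sum_{\ell}(|E_{\ell,\parallel}| + |E_{\ell,\perp}|) + 2s|E_o| \le 4sn + 2s(3n-6)$ after absorbing $E_o$ into the $2s$ planar bounds — but that gives $(10s)n$, not $(6+4s)n$.

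The discrepancy tells me the correct generalization keeps the "$3n-6$ appears only twice" feature of \cref{thm:ap-2} and does not multiply it by $s$: namely, an edge in $E \setminus S$ is determined by the \emph{single} direction of its middle segment, and two such edges cross only if their middle segments are one parallel and one perpendicular to the \emph{same} $\ell$. So group $E \setminus S = E_o \cup (\bigcup_\ell E_\ell)$ where $E_\ell$ holds edges whose middle segment is parallel or perpendicular to $\ell$. The graph $(\bigcup_\ell E_\ell) \setminus E_o$ has the property that its crossings are "local to a slope", but that does not make the whole thing planar. The right move, matching the constant, is: the edges with a \emph{parallel}-to-some-$\ell$ middle segment, union $E_o$, decompose into at most... hmm. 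Let me instead mirror the bookkeeping of \cref{thm:ap-2} exactly: there, $|E| \le |S| + |E_h| + |E_v| + 2|E_o|$ with $|E_h \cup E_o| \le 3n-6$ and $|E_v \cup E_o| \le 3n-6$. The natural $s$-analogue replaces $h,v$ by the $2s$ directions but keeps \emph{two} global planar bounds only if we note that $E_o \cup (\text{all "parallel" middles})$ is \emph{not} planar for $s>1$; so honestly the clean statement is $|E| \le |S| + \sum_{\text{dir }t}|E_t| + (\text{overcount})$, and the cleanest correct bound consistent with $(6+4s)n$ comes from: $|S|\le 4sn$, and $E\setminus S$ induces a graph in which every crossing uses a parallel/perpendicular pair from one reference line, so it is $s$-planar in a controlled sense — but I will simply follow the skeleton of \cref{thm:ap-2} directionwise and track the constants, using that oblique-middle edges form a planar subgraph ($\le 3n-6$) contributing the "$6$", and the $4s$ comes from the per-vertex port count; the middle-non-oblique edges get charged to whichever of the $2s$ slope-classes they belong to, each class being planar when unioned with the obliques. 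I expect the careful constant chase here to be the main obstacle: reconciling "$2s$ planar classes of size $3n-6$" with the claimed "$6 + 4s$" leading behavior will require the same telescoping trick used in the $s=1$ proof (each oblique-middle edge is counted once, each slope-class inherits a fresh $3n-6$ only through the \emph{two} halves of one reference line), and getting the additive $-12$ exactly right. Once the first bound $(6+4s)n-12$ is in hand, the theorem follows by taking the minimum with the Crossing-Lemma bound $71.9n - \mathcal{O}(1)$, which dominates as soon as $s$ is a large enough constant.
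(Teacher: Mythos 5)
There is a genuine gap: you never actually prove the bound $(6+4s)n-12$, and your own accounting shows why your route stalls. Splitting $E\setminus S$ by the direction of the middle segment into $2s$ classes, each planar together with the oblique-middle edges, can only give $|E|\le 4sn + 2s(3n-6)$, i.e.\ a leading term of order $10sn$, and you end by hoping for a ``telescoping trick'' from the $s=1$ proof; no such trick exists there -- the proof of \cref{thm:ap-2} closes because for $s=1$ there are exactly \emph{two} orthogonal direction classes, so the number of planar subgraphs is $2$, not $2s$.

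The missing idea, which is the heart of the paper's argument, is that the $3n-6$ term should appear exactly twice \emph{independently of $s$}: in the subdrawing $D$ induced by $E\setminus S$ every crossing involves only middle segments, so subdividing each edge at its bends and keeping only the middle segments yields a straight-line RAC drawing whose crossing graph coincides with that of $D$; since the crossing graph of any $0$-bend RAC drawing is bipartite~\cite{DBLP:journals/tcs/DidimoEL11}, the edges of $E\setminus S$ admit a $2$-coloring with no monochromatic crossing, hence split into two planar subgraphs and $|E\setminus S|\le 2(3n-6)=6n-12$. (Equivalently, in your language: crossings only join a direction to its perpendicular, and since each of the $2s$ directions has a unique perpendicular partner one can $2$-color the directions so that perpendicular directions get different colors -- you gesture at ``crossings are local to a slope'' but never take this step.) With that, $|E|\le 4sn + 6n - 12 = (6+4s)n-12$, and combining with the $71.9n-\mathcal{O}(1)$ bound for general $2$-bend RAC graphs (which you handle correctly) finishes the theorem. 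As written, your proposal establishes only the second term of the minimum and a strictly weaker first bound.
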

\begin{proof}
For $s \geq 17$, the upper bound for $2$-bend RAC graphs holds. For every line of unique slope, we can have at most four edge segments incident to every vertex (two parallel and two perpendicular), thus at most
 $4sn$ many edge segments incident to the vertices can be involved in a crossing by definition. Let us denote the edges that contain at least one such segment as $I$ and observe that $|I| \leq 4sn$. Then, $E \setminus I$ consists of edges where only the middle parts can be involved in any crossing. Denote by $D$ the subdrawing of $\Gamma$ restricted to the edges of $E \setminus I$. We claim that the crossing graph of $D$ is bipartite. To see that this yields the desired result, observe that a bipartite crossing graph implies the existence of a two coloring of the edges of $D$ such that no two edges of the same color cross. Hence, the set of edges restricted to a color forms a planar subdrawing and therefore contains at most $3n-6$ edges, thus $|E \setminus I| \leq 6n-12$ and the statement follows.
It remains to proof our claim. By definition, only the middle parts are involved in crossings in $D$. Subdividing the edges at the bends and then restricting $D$ to only contain the edges which corresponded to middle parts in $D$ yields a straight-line RAC drawing $D'$ such that the crossing graph of $D$ and of $D'$ coincide. Since $D'$ is a $0$-bend RAC drawing, we have that the crossing graph of $D'$ is bipartite~\cite{DBLP:journals/tcs/DidimoEL11}, and hence the crossing graph of $D$ is as well, which concludes the proof.
\end{proof}

\section{Conclusion and Open Problems}\label{sec:conclusions}
In this paper, we introduced the class of \rac{k} graphs, gave edge-density bounds, studied inclusion relationships with the general $k$-bend RAC graphs, and concluded with an algorithmic result for graphs with maximum degree $8$.
A natural extension is to allow drawings where each crossing edge-segment is parallel or perpendicular to a line having one out of $s$ different slopes. We denote the class of graphs which admit such a drawing as \srac{k}{s}, and w.l.o.g.\ we assume that the horizontal slope is among the $s$ ones. Observe that for $s=1$, the derived class coincides with the class of \rac{k} graphs.
By joining several copies of the graph supporting \cref{prop:antiprism-two-embbeddings} that all share a common vertex, we show that \srac{0}{s} graphs form a proper subclass of $0$-bend RAC graphs for any $s \in o(n)$; see \cref{thm:s-rac}, which generalizes \cref{cor:separatating}. 
We also adjust the proof of \cref{thm:ap-2} to derive an upper bound on the edge density of \srac{2}{s} graphs; see \cref{thm:upper-s-rac} and observe that our upper bound is better than the one of~\cite{DBLP:journals/comgeo/ArikushiFKMT12} that holds for general $2$-bend RAC graphs for values of $s$ up to $17$. We conclude with the following open problems.
\begin{itemize}
    \item Are there $2$-bend RAC graphs that are not \rac{2}?
    \item 
    For $k \in \{1,2\}$, our edge-density bounds do not relate to the simplicity of the drawings. 
    Are bounds different for simple drawings, as in the general $1$-bend RAC case~\cite{DBLP:journals/tcs/AngeliniBFK20}?

    \item For $k \in \{1,2\}$, does the class of \srac{k}{s} graphs on $n$ vertices coincide with the corresponding class of $k$-bend RAC graphs, when $s \in o(n)$?
    
     \item Based on our exploration of \rac{2} graphs, we conjecture that the edge density of general $2$-bend RAC graphs on $n$ vertices is $10n - \mathcal{O}(1)$.
     
     \item Another important open problem in the field is to settle the complexity of the recognition of general 1-bend RAC graphs. What if we restrict ourselves to the axis-parallel setting?
     
     \item From a practical perspective, can an advantage in the readability of \rac{k} drawings over $k$-bend RAC drawings be demonstrated in user studies?
 \end{itemize}

\newpage
\bibliography{bibliography}

\begin{thebibliography}{10}

\bibitem{DBLP:journals/comgeo/Ackerman19}
Eyal Ackerman.
\newblock On topological graphs with at most four crossings per edge.
\newblock {\em Comput. Geom.}, 85, 2019.
\newblock \href {https://doi.org/10.1016/j.comgeo.2019.101574}
  {\path{doi:10.1016/j.comgeo.2019.101574}}.

\bibitem{DBLP:journals/combinatorica/AgarwalAPPS97}
Pankaj~K. Agarwal, Boris Aronov, J{\'{a}}nos Pach, Richard Pollack, and Micha
  Sharir.
\newblock Quasi-planar graphs have a linear number of edges.
\newblock {\em Comb.}, 17(1):1--9, 1997.
\newblock \href {https://doi.org/10.1007/BF01196127}
  {\path{doi:10.1007/BF01196127}}.

\bibitem{DBLP:journals/tcs/AngeliniBFK20}
Patrizio Angelini, Michael~A. Bekos, Henry F{\"{o}}rster, and Michael Kaufmann.
\newblock On {RAC} drawings of graphs with one bend per edge.
\newblock {\em Theor. Comput. Sci.}, 828-829:42--54, 2020.
\newblock \href {https://doi.org/10.1016/j.tcs.2020.04.018}
  {\path{doi:10.1016/j.tcs.2020.04.018}}.

\bibitem{DBLP:conf/mfcs/AngeliniBKKP22}
Patrizio Angelini, Michael~A. Bekos, Julia Katheder, Michael Kaufmann, and
  Maximilian Pfister.
\newblock {RAC} drawings of graphs with low degree.
\newblock In Stefan Szeider, Robert Ganian, and Alexandra Silva, editors, {\em
  {MFCS}}, volume 241 of {\em LIPIcs}, pages 11:1--11:15. Schloss Dagstuhl -
  Leibniz-Zentrum f{\"{u}}r Informatik, 2022.
\newblock \href {https://doi.org/10.4230/LIPIcs.MFCS.2022.11}
  {\path{doi:10.4230/LIPIcs.MFCS.2022.11}}.

\bibitem{DBLP:journals/jgaa/AngeliniCDFBKS11}
Patrizio Angelini, Luca Cittadini, Walter Didimo, Fabrizio Frati, Giuseppe {Di
  Battista}, Michael Kaufmann, and Antonios Symvonis.
\newblock On the perspectives opened by right angle crossing drawings.
\newblock {\em J. Graph Algorithms Appl.}, 15(1):53--78, 2011.
\newblock \href {https://doi.org/10.7155/jgaa.00217}
  {\path{doi:10.7155/jgaa.00217}}.

\bibitem{DBLP:journals/jgaa/ArgyriouBKS13}
Evmorfia~N. Argyriou, Michael~A. Bekos, Michael Kaufmann, and Antonios
  Symvonis.
\newblock Geometric {RAC} simultaneous drawings of graphs.
\newblock {\em J. Graph Algorithms Appl.}, 17(1):11--34, 2013.
\newblock \href {https://doi.org/10.7155/jgaa.00282}
  {\path{doi:10.7155/jgaa.00282}}.

\bibitem{DBLP:journals/jgaa/ArgyriouBS12}
Evmorfia~N. Argyriou, Michael~A. Bekos, and Antonios Symvonis.
\newblock The straight-line {RAC} drawing problem is {NP}-hard.
\newblock {\em J. Graph Algorithms Appl.}, 16(2):569--597, 2012.
\newblock \href {https://doi.org/10.7155/jgaa.00274}
  {\path{doi:10.7155/jgaa.00274}}.

\bibitem{DBLP:journals/comgeo/ArikushiFKMT12}
Karin Arikushi, Radoslav Fulek, Bal{\'{a}}zs Keszegh, Filip Moric, and Csaba~D.
  T{\'{o}}th.
\newblock Graphs that admit right angle crossing drawings.
\newblock {\em Comput. Geom.}, 45(4):169--177, 2012.
\newblock \href {https://doi.org/10.1016/j.comgeo.2011.11.008}
  {\path{doi:10.1016/j.comgeo.2011.11.008}}.

\bibitem{DBLP:journals/tcs/BaeBCEE0HKMRT18}
Sang~Won Bae, Jean{-}Fran{\c{c}}ois Baffier, Jinhee Chun, Peter Eades, Kord
  Eickmeyer, Luca Grilli, Seok{-}Hee Hong, Matias Korman, Fabrizio
  Montecchiani, Ignaz Rutter, and Csaba~D. T{\'{o}}th.
\newblock Gap-planar graphs.
\newblock {\em Theor. Comput. Sci.}, 745:36--52, 2018.
\newblock \href {https://doi.org/10.1016/j.tcs.2018.05.029}
  {\path{doi:10.1016/j.tcs.2018.05.029}}.

\bibitem{DBLP:journals/tcs/BekosDLMM17}
Michael~A. Bekos, Walter Didimo, Giuseppe Liotta, Saeed Mehrabi, and Fabrizio
  Montecchiani.
\newblock On {RAC} drawings of 1-planar graphs.
\newblock {\em Theor. Comput. Sci.}, 689:48--57, 2017.
\newblock \href {https://doi.org/10.1016/j.tcs.2017.05.039}
  {\path{doi:10.1016/j.tcs.2017.05.039}}.

\bibitem{DBLP:journals/jgaa/BekosDKW16}
Michael~A. Bekos, Thomas~C. van Dijk, Philipp Kindermann, and Alexander Wolff.
\newblock Simultaneous drawing of planar graphs with right-angle crossings and
  few bends.
\newblock {\em J. Graph Algorithms Appl.}, 20(1):133--158, 2016.
\newblock \href {https://doi.org/10.7155/jgaa.00388}
  {\path{doi:10.7155/jgaa.00388}}.

\bibitem{DBLP:journals/comgeo/BiedlK98}
Therese~C. Biedl and Goos Kant.
\newblock A better heuristic for orthogonal graph drawings.
\newblock {\em Comput. Geom.}, 9(3):159--180, 1998.
\newblock \href {https://doi.org/10.1016/S0925-7721(97)00026-6}
  {\path{doi:10.1016/S0925-7721(97)00026-6}}.

\bibitem{DBLP:journals/tcs/BrandenburgDEKL16}
Franz~J. Brandenburg, Walter Didimo, William~S. Evans, Philipp Kindermann,
  Giuseppe Liotta, and Fabrizio Montecchiani.
\newblock Recognizing and drawing {IC}-planar graphs.
\newblock {\em Theor. Comput. Sci.}, 636:1--16, 2016.
\newblock \href {https://doi.org/10.1016/j.tcs.2016.04.026}
  {\path{doi:10.1016/j.tcs.2016.04.026}}.

\bibitem{DBLP:conf/swat/ChaplickFK020}
Steven Chaplick, Henry F{\"{o}}rster, Myroslav Kryven, and Alexander Wolff.
\newblock Drawing graphs with circular arcs and right-angle crossings.
\newblock In Susanne Albers, editor, {\em 17th Scandinavian Symposium and
  Workshops on Algorithm Theory, {SWAT} 2020, June 22-24, 2020, T{\'{o}}rshavn,
  Faroe Islands}, volume 162 of {\em LIPIcs}, pages 21:1--21:14. Schloss
  Dagstuhl - Leibniz-Zentrum f{\"{u}}r Informatik, 2020.
\newblock \href {https://doi.org/10.4230/LIPIcs.SWAT.2020.21}
  {\path{doi:10.4230/LIPIcs.SWAT.2020.21}}.

\bibitem{DBLP:journals/comgeo/ChaplickLWZ19}
Steven Chaplick, Fabian Lipp, Alexander Wolff, and Johannes Zink.
\newblock Compact drawings of 1-planar graphs with right-angle crossings and
  few bends.
\newblock {\em Comput. Geom.}, 84:50--68, 2019.
\newblock \href {https://doi.org/10.1016/j.comgeo.2019.07.006}
  {\path{doi:10.1016/j.comgeo.2019.07.006}}.

\bibitem{DBLP:journals/ijcga/DehkordiE12}
Hooman~Reisi Dehkordi and Peter Eades.
\newblock Every outer-1-plane graph has a right angle crossing drawing.
\newblock {\em Int. J. Comput. Geom. Appl.}, 22(6):543--558, 2012.
\newblock \href {https://doi.org/10.1142/S021819591250015X}
  {\path{doi:10.1142/S021819591250015X}}.

\bibitem{DBLP:journals/algorithmica/GiacomoDEL14}
Emilio {Di Giacomo}, Walter Didimo, Peter Eades, and Giuseppe Liotta.
\newblock 2-layer right angle crossing drawings.
\newblock {\em Algorithmica}, 68(4):954--997, 2014.
\newblock \href {https://doi.org/10.1007/s00453-012-9706-7}
  {\path{doi:10.1007/s00453-012-9706-7}}.

\bibitem{DBLP:journals/cj/GiacomoDGLR15}
Emilio {Di Giacomo}, Walter Didimo, Luca Grilli, Giuseppe Liotta, and
  Salvatore~Agostino Romeo.
\newblock Heuristics for the maximum 2-layer {RAC} subgraph problem.
\newblock {\em Comput. J.}, 58(5):1085--1098, 2015.
\newblock \href {https://doi.org/10.1093/comjnl/bxu017}
  {\path{doi:10.1093/comjnl/bxu017}}.

\bibitem{DBLP:journals/mst/GiacomoDLM11}
Emilio {Di Giacomo}, Walter Didimo, Giuseppe Liotta, and Henk Meijer.
\newblock Area, curve complexity, and crossing resolution of non-planar graph
  drawings.
\newblock {\em Theory Comput. Syst.}, 49(3):565--575, 2011.
\newblock \href {https://doi.org/10.1007/s00224-010-9275-6}
  {\path{doi:10.1007/s00224-010-9275-6}}.

\bibitem{DBLP:books/sp/20/Didimo20}
Walter Didimo.
\newblock Right angle crossing drawings of graphs.
\newblock In Seok{-}Hee Hong and Takeshi Tokuyama, editors, {\em Beyond Planar
  Graphs}, pages 149--169. Springer, 2020.
\newblock \href {https://doi.org/10.1007/978-981-15-6533-5\_9}
  {\path{doi:10.1007/978-981-15-6533-5\_9}}.

\bibitem{DBLP:journals/ipl/DidimoEL10}
Walter Didimo, Peter Eades, and Giuseppe Liotta.
\newblock A characterization of complete bipartite {RAC} graphs.
\newblock {\em Inf. Process. Lett.}, 110(16):687--691, 2010.
\newblock \href {https://doi.org/10.1016/j.ipl.2010.05.023}
  {\path{doi:10.1016/j.ipl.2010.05.023}}.

\bibitem{DBLP:journals/tcs/DidimoEL11}
Walter Didimo, Peter Eades, and Giuseppe Liotta.
\newblock Drawing graphs with right angle crossings.
\newblock {\em Theor. Comput. Sci.}, 412(39):5156--5166, 2011.
\newblock \href {https://doi.org/10.1016/j.tcs.2011.05.025}
  {\path{doi:10.1016/j.tcs.2011.05.025}}.

\bibitem{DBLP:journals/dam/EadesL13}
Peter Eades and Giuseppe Liotta.
\newblock Right angle crossing graphs and 1-planarity.
\newblock {\em Discret. Appl. Math.}, 161(7-8):961--969, 2013.
\newblock \href {https://doi.org/10.1016/j.dam.2012.11.019}
  {\path{doi:10.1016/j.dam.2012.11.019}}.

\bibitem{DBLP:journals/dam/EadesSW00}
Peter Eades, Antonios Symvonis, and Sue Whitesides.
\newblock Three-dimensional orthogonal graph drawing algorithms.
\newblock {\em Discret. Appl. Math.}, 103(1-3):55--87, 2000.
\newblock \href {https://doi.org/10.1016/S0166-218X(00)00172-4}
  {\path{doi:10.1016/S0166-218X(00)00172-4}}.

\bibitem{DBLP:conf/esa/Forster020}
Henry F{\"{o}}rster and Michael Kaufmann.
\newblock On compact {RAC} drawings.
\newblock In Fabrizio Grandoni, Grzegorz Herman, and Peter Sanders, editors,
  {\em {ESA}}, volume 173 of {\em LIPIcs}, pages 53:1--53:21. Schloss Dagstuhl
  - Leibniz-Zentrum f{\"{u}}r Informatik, 2020.
\newblock \href {https://doi.org/10.4230/LIPIcs.ESA.2020.53}
  {\path{doi:10.4230/LIPIcs.ESA.2020.53}}.

\bibitem{DBLP:conf/gd/FossmeierK95}
Ulrich F{\"{o}}{\ss}meier and Michael Kaufmann.
\newblock Drawing high degree graphs with low bend numbers.
\newblock In Franz{-}Josef Brandenburg, editor, {\em {GD}}, volume 1027 of {\em
  LNCS}, pages 254--266. Springer, 1995.
\newblock \href {https://doi.org/10.1007/BFb0021809}
  {\path{doi:10.1007/BFb0021809}}.

\bibitem{DBLP:journals/siamcomp/GargT01}
Ashim Garg and Roberto Tamassia.
\newblock On the computational complexity of upward and rectilinear planarity
  testing.
\newblock {\em {SIAM} J. Comput.}, 31(2):601--625, 2001.
\newblock \href {https://doi.org/10.1137/S0097539794277123}
  {\path{doi:10.1137/S0097539794277123}}.

\bibitem{DBLP:journals/siamcomp/Hadlock75}
F.~Hadlock.
\newblock Finding a maximum cut of a planar graph in polynomial time.
\newblock {\em {SIAM} J. Comput.}, 4(3):221--225, 1975.

\bibitem{DBLP:books/sp/20/HT2020}
Seok{-}Hee Hong and Takeshi Tokuyama, editors.
\newblock {\em Beyond Planar Graphs}.
\newblock Springer, 2020.
\newblock \href {https://doi.org/10.1007/978-981-15-6533-5}
  {\path{doi:10.1007/978-981-15-6533-5}}.

\bibitem{DBLP:journals/vlc/HuangEH14}
Weidong Huang, Peter Eades, and Seok{-}Hee Hong.
\newblock Larger crossing angles make graphs easier to read.
\newblock {\em J. Vis. Lang. Comput.}, 25(4):452--465, 2014.
\newblock \href {https://doi.org/10.1016/j.jvlc.2014.03.001}
  {\path{doi:10.1016/j.jvlc.2014.03.001}}.

\bibitem{DBLP:journals/combinatorica/PachT97}
J{\'{a}}nos Pach and G{\'{e}}za T{\'{o}}th.
\newblock Graphs drawn with few crossings per edge.
\newblock {\em Comb.}, 17(3):427--439, 1997.
\newblock \href {https://doi.org/10.1007/BF01215922}
  {\path{doi:10.1007/BF01215922}}.

\bibitem{DBLP:reference/crc/2013gd}
Roberto Tamassia, editor.
\newblock {\em Handbook on Graph Drawing and Visualization}.
\newblock Chapman and Hall/CRC, 2013.
\newblock URL:
  \url{https://www.crcpress.com/Handbook-of-Graph-Drawing-and-Visualization/Tamassia/9781584884125}.

\end{thebibliography}

\end{document}